\newtheorem{theorem}{Theorem}
\newtheorem{proposition}{Proposition}
\theoremstyle{definition}
\newtheorem{definition}[theorem]{Definition}
\def\beq{\begin{equation}}
\def\eeq{\end{equation}}
\def\bea{\begin{eqnarray}}
\def\eea{\end{eqnarray}}
\def\beann{\begin{eqnarray*}}
\def\eeann{\end{eqnarray*}}
\def\ben{\begin{enumerate}}
\def\een{\end{enumerate}}
\def\bit{\begin{itemize}}
\def\eit{\end{itemize}}
\def\derpar#1#2{\frac{\partial{#1}}{\partial{#2}}}
\newcommand\restr[2]{{% we make the whole thing an ordinary symbol
  \left.\kern-\nulldelimiterspace % automatically resize the bar with \right
  #1 % the function
  %\vphantom{\big|} % pretend it's a little taller at normal size
  \right|_{#2} % this is the delimiter
}}
\newcommand{\R}{\mathbb{R}}
\newcommand{\diff}{\mathrm{d}}
\newcommand{\U}{\mathcal{U}}
\newcommand{\X}{\mathfrak{X}}
\newcommand{\Reeb}{\mathcal{R}}
\newcommand{\calX}{\mathcal{X}}
\newcommand{\calZ}{\mathcal{Z}}
\def\d{\mathrm{d}}
\title{\sf
Constraint algorithm for singular field theories\\
in the $k$-cosymplectic framework}
\author{\sffamily 
Xavier Gr\`acia\thanks{xavier.gracia@upc.edu}, 
Xavier Rivas\thanks{xavier.rivas@upc.edu} and 
Narciso Rom\'an-Roy\thanks{narciso.roman@upc.edu}\\[1ex]
\normalsize\itshape\sffamily 
Department of Mathematics\\ 
\normalsize\itshape\sffamily 
Universitat Polit\`ecnica de Catalunya\\
\normalsize\itshape\sffamily 
Barcelona, Catalonia, Spain}
\date{\sffamily 
19 December 2018}
\begin{document}
%%%%%%%%%%%%%%%%%%%%%%%%%%%%%%%%%%%%%%%%%%%%%%%%%%%%%%%%%%%%%%%%

\maketitle

%%%%%%%%%%%%%%%%%%%%%%%%%%%%%%%%%%%%%%%%%%%%%%%%%%%%%%%%%%%%%%%%
\subsection*{Abstract}
%%%%%%%%%%%%%%%%%%%%%%%%%%%%%%%%%%%%%%%%%%%%%%%%%%%%%%%%%%%%%%%%

The aim of this paper is to develop a constraint algorithm for singular classical field theories in the framework of $k$-cosymplectic geometry.
Since these field theories are singular,
we need to introduce the notion of $k$-precosymplectic structure,
which is a generalization of the $k$-cosymplectic structure.
Next $k$-precosymplectic Hamiltonian systems are introduced
in order to describe singular field theories,
both in Lagrangian and Hamiltonian formalisms.
Finally, we develop a constraint algorithm in order to find a submanifold where the existence of solutions of the field equations is ensured.
The case of affine Lagrangians is studied as a relevant example.

\paragraph*{Keywords:}
$k$-cosymplectic manifold, $k$-precosymplectic manifold, constraint algorithm, singular field theories, Hamiltonian formalism, Lagrangian formalism, affine Lagrangian.
\paragraph*{MSC\,2010:}
Primary: 70H45; Secondary: 35R01, 53C15, 53D99, 70G45, 70S05.

%%%%%%%%%%%%%%%%%%%%%%%%%%%%%%%%%%%%%%%%%%%%%%%%%%%%%%%%%%%%%%%%
\section{Introduction}
%%%%%%%%%%%%%%%%%%%%%%%%%%%%%%%%%%%%%%%%%%%%%%%%%%%%%%%%%%%%%%%%

Many theories in modern physics can be formulated
using the tools of differential geometry.
% in geometric terms using the tools of geometric mechanics and geometric field theory.
The natural framework for autonomous Hamiltonian mechanical systems is symplectic geometry
\cite{abr78},
whereas its nonautonomous counterpart can be nicely described using cosymplectic or contact geometry
\cite{CLM91,abr78}.
These two formulations admit straightforward generalizations to first order classical field theory using $k$-symplectic and $k$-cosymplectic structures,
which are the generalization to field theories of the autonomous and nonautonomous cases in mechanics
\cite{awa92,LeSaVi2016,Gu-87}.
A more general framework for classical field theories can be built up by using multisymplectic geometry
(see
\cite{nrr2009}
and references therein;
see also \cite{RRSV2011} for an analysis of the relationship among these formulations).

Singular systems are important because of their role in modern physics,
in mechanics and especially in field theory.
In fact, some of the most important physical theories are singular;
for instance, Maxwell's theory of electromagnetism, general relativity, string theory and, in general, all gauge theories.
The main problem of singular theories is the failure of usual existence and uniqueness theorems for the solutions of the differential equations which describe them.
This problem is usually solved by finding a submanifold of the phase space manifold of the system where the existence of solutions is ensured.
This can be done by applying the so-called constraint algorithms.

P.G. Bergmann and P.A.M. Dirac were the first to develop a constraint algorithm to solve the problem for the Hamiltonian formalism of singular mechanics
\cite{AB-51,Di-50}.
These works were written using a local coordinate language and they were later generalized to other situations
(see, for instance,
\cite{BGPR-86,SM-74,Su-82}).
Many people worked in the geometric version of this algorithm, both for the Hamiltonian and Lagrangian formalisms.
Some of the most relevant contributions in this way are
\cite{got79,got78,GP-92,MMT-97,MR-92,Vi-2000},
which dealt with several geometric formulations of autonomous mechanics.
This was later generalized to nonautonomous systems \cite{chi94,dLe02,GM2005}.
These constraint algorithms were adapted to singular field theories in the multisymplectic
\cite{dLe96B,dLe05}
and the $k$-symplectic
\cite{gra09}
frameworks.

The $k$-symplectic formulation, in a certain sense, corresponds to autonomous mechanics.
The non-autonomous analogue of it, namely,
field theories where the Lagrangian or the Hamiltonian functions depend on the coordinates of the base manifold,
is provided by the $k$-cosymplectic formulation.
The aim of this article is to complete this program by developing a constraint algorithm for singular field theories in the framework of $k$-cosymplectic geometry.
Since these field theories are singular,
we need to introduce the notion of $k$-precosymplectic structure,
which is a generalization of the $k$-cosymplectic structure,
and also define $k$-precosymplectic Hamiltonian systems.
Then we will develop a constraint algorithm, similar to those mentioned above, in order to find a constraint submanifold where the existence of solutions to the field equations is ensured.

The paper is organized as follows:
Section~2 is devoted to review several preliminary concepts.
In particular, first we introduce $k$-vector fields and their integral sections, and
next we review the main features about $k$-cosymplectic geometry.
In Section~3, $k$-precosymplectic manifolds are introduced; they are the model of the phase spaces for $k$-cosymplectic field theories described by singular Lagrangians.
We define the concept of a $k$-precosymplectic manifold, we introduce Darboux coordinates in these manifolds, and
we prove the existence of Reeb vector fields for them and discuss conditions for their uniqueness.
These structures are used in Section~4
to present the $k$-cosymplectic formulation
of nonautonomous classical field theory,
both in the Lagrangian and the Hamiltonian formalisms.
Section~5
is devoted to present the constraint algorithm
for $k$-precosymplectic field theories,
which is a generalization of
the algorithm for $k$-presymplectic field theories developed in
\cite{gra09}.
Finally, in Section~6 some examples are discussed:
first, the Lagrangian and Hamiltonian formalisms for the case of affine Lagrangians in general (including a particular model) and second,
a simple model derived from the vibrating string,
both in the Lagrangian and Hamiltonian formalisms.

Throughout the paper all the manifolds and mappings are assumed to be smooth.
Sum over crossed repeated indices is understood.

%%%%%%%%%%%%%%%%%%%%%%%%%%%%%%%%%%%%%%%%%%%%%%%%%%%%%%%%%%%%%%%%
\section{Preliminaries: \texorpdfstring{$k$}--cosymplectic geometry}
%%%%%%%%%%%%%%%%%%%%%%%%%%%%%%%%%%%%%%%%%%%%%%%%%%%%%%%%%%%%%%%%
%%%%%%%%%%%%%%%%%%%%%%%%%%%%%%%%%%%%%%%%%%%%%%%%%%%%%%%%%%%%%%%%

In this section we review the notions of
$k$-vector field and its integral sections,
as well as some general concepts on
$k$-cosymplectic geometry.
Some references on these topics are
\cite{LeSaVi2016,dLe98,dLe01}.			

%%%%%%%%%%%%%%%%%%%%%%%%%%%%%%%%%%%%%%%%%%%%%%%%%%%%%%%%%%%%%%%%
\subsection{\texorpdfstring{$k$}--vector fields and integral sections}
%%%%%%%%%%%%%%%%%%%%%%%%%%%%%%%%%%%%%%%%%%%%%%%%%%%%%%%%%%%%%%%%

Let $M$ be an $m$-dimensional smooth manifold and its tangent bundle $\tau\colon TM\to M$.
The \textbf{tangent bundle of $k^1$-velocities} is defined as the Whitney sum
$T^1_kM = TM\oplus_M\overset{k}{\dotsb}\oplus_M TM$,
with the canonical projection $\tau^k\colon T^1_kM\to M$.
			
\begin{definition}
A \textbf{$k$-vector field} $\calX$ on $M$ is a section of the projection $\tau^k$. We denote by $\X^k(M)$ the set of all $k$-vector fields on $M$.
\end{definition}
			
Notice that using the diagram
\begin{equation*}
	\xymatrix{ && T^1_kM \ar[dd]^{\tau^{k,\alpha}} \\\\
	M \ar[uurr]^{\calX} \ar[rr]^{X_\alpha} && TM }
\end{equation*}
every $k$-vector field $\calX$  can be decomposed  as $\calX = (X_1,\dotsc,X_k)$, where $X_\alpha \in\X(M)$.
			
\begin{definition}
Let $\calX=(X_1,\dotsc,X_k)$ be a $k$-vector field on $M$.
An \textbf{integral section} of $\calX$ passing through $p\in M$
is a map $\varphi\colon \U\subset\R^k\to M$,
with $0\in\U$, and such that
\begin{enumerate}[(1)]
\item
$\varphi(0) = p$,
\item
$\displaystyle
T_x\varphi \Big(\frac{\partial}{\partial x^\alpha}\Big|_x\Big) =
X_\alpha(\varphi(x))$,
for every $x\in\U$, for all
$1\leq\alpha\leq k$, and where $\{ x^\alpha\}$
are the canonical coordinates in $\R^k$.
\end{enumerate}
\end{definition}

%%%%%%%%%%%%%%%%%%%%%%%%%%%%%%%%%%%%%%%%%%%%%%%%%%%%%%%%%%%%%%%%
\subsection{\texorpdfstring{$k$}--cosymplectic geometry}
%%%%%%%%%%%%%%%%%%%%%%%%%%%%%%%%%%%%%%%%%%%%%%%%%%%%%%%%%%%%%%%%
			
\begin{definition}\label{dfn-k-cosymplectic-manifold}
\cite{LeSaVi2016}
Let $M$ be a manifold of dimension $m = k(n+1)+n$.
A \textbf{$k$-cosymplectic structure} on $M$ is a family $(\eta^\alpha,\omega^\alpha,V;1\leq\alpha\leq k)$,
where each $\eta^\alpha$ is a closed 1-form,
each $\omega^\alpha$ is a closed 2-form and
$V$ is an integrable $nk$-dimensional distribution on $M$ satisfying
\begin{enumerate}[(1)]
\item $\eta^1\wedge\dotsb\wedge\eta^k\neq 0$, $\restr{\eta^\alpha}{V} = 0$, $\restr{\omega^\alpha}{V\times V} = 0$,
\item $\left( \bigcap_{\alpha = 1}^k \ker\eta^\alpha \right)\cap\left( \bigcap_{\alpha = 1}^k \ker\omega^\alpha \right) = \{0\}$, $\dim\left( \bigcap_{\alpha = 1}^k \ker\omega^\alpha \right) = k$.
\end{enumerate}
Then, $(M,\eta^\alpha,\omega^\alpha,V)$ is said to be a \textbf{$k$-cosymplectic manifold}.
\end{definition}

In particular, if $k=1$, then $\dim M = 2n+1$ and $(\eta^1,\omega^1)$ is a cosymplectic structure on $M$.

\begin{definition}
Let $(M,\eta^\alpha,\omega^\alpha,V)$ be a $k$-cosymplectic manifold.
Then there exists a family of $k$ vector fields $\{\Reeb_\alpha\}$,
which are called \textbf{Reeb vector fields},
characterized by the following conditions
\begin{equation}
\label{eq-Reeb}
	i_{\Reeb_\alpha}\eta^\beta = \delta^\beta_\alpha \,,\quad i_{\Reeb_\alpha}\omega^\beta = 0 \,.
\end{equation}
\end{definition}

\begin{theorem}   [Darboux theorem for $k$-cosymplectic manifolds]
Let $(M,\eta^\alpha,\omega^\alpha,V)$ be a $k$-cosymplectic manifold.
Then around each point of $M$ there exist local coordinates
$(x^\alpha, y^i,y_i^\alpha)$ with $1\leq\alpha\leq k,1\leq i\leq n$
such that
\begin{equation*}
	\eta^\alpha = \d x^\alpha \,,\quad
	\omega^\alpha = \d y^i\wedge\d y_i^\alpha,\quad
	V= \left\langle
	  \frac{\partial}{\partial y^1_i} \,, \dotsc,    \frac{\partial}{\partial y^k_i}
	\right\rangle_{i = 1,\dotsc,n} .
\end{equation*}
These are the so-called \textbf{Darboux} or \textbf{canonical coordinates} of the $k$-cosymplectic manifold $M$.
\end{theorem}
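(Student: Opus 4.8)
The plan is to peel off the $k$ directions carried by the Reeb vector fields and reduce the statement to the Darboux theorem for $k$-symplectic manifolds. First, since each $\eta^\alpha$ is closed, the Poincar\'e lemma provides functions $x^\alpha$ near the given point with $\eta^\alpha=\d x^\alpha$; as $\eta^1\wedge\dotsb\wedge\eta^k\neq0$ these differentials are pointwise linearly independent, so $(x^1,\dots,x^k)$ extends to a chart, and $\Reeb_\alpha(x^\beta)=i_{\Reeb_\alpha}\eta^\beta=\delta^\beta_\alpha$.

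Next I would show that the Reeb vector fields commute. Using $(\ref{eq-Reeb})$, the identity $i_{[\Reeb_\alpha,\Reeb_\beta]}=\L_{\Reeb_\alpha}i_{\Reeb_\beta}-i_{\Reeb_\beta}\L_{\Reeb_\alpha}$ and $\d\eta^\gamma=\d\omega^\gamma=0$, one obtains $i_{[\Reeb_\alpha,\Reeb_\beta]}\eta^\gamma=0$ and $i_{[\Reeb_\alpha,\Reeb_\beta]}\omega^\gamma=0$ for every $\gamma$, so condition~(2) of Definition~\ref{dfn-k-cosymplectic-manifold} forces $[\Reeb_\alpha,\Reeb_\beta]=0$; along the way one records $\L_{\Reeb_\alpha}\eta^\beta=\L_{\Reeb_\alpha}\omega^\beta=0$ and that the $\Reeb_\alpha$ are pointwise linearly independent. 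Straightening these commuting fields, compatibly with the $x^\alpha$ already fixed (possible since $\Reeb_\alpha(x^\beta)=\delta^\beta_\alpha$), yields a chart $(x^\alpha,z^A)$, $A=1,\dots,n(k+1)$, with $\Reeb_\alpha=\partial/\partial x^\alpha$. Put $N:=\{x^1=\dots=x^k=0\}$ and let $p$ denote the projection onto $N$ along the $x^\alpha$.

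Because $i_{\Reeb_\alpha}\omega^\beta=0$ and $\L_{\Reeb_\alpha}\omega^\beta=0$, each $\omega^\beta$ is basic for $p$, i.e.\ $\omega^\beta=p^\ast\bar\omega^\beta$ with $\bar\omega^\beta:=\restr{\omega^\beta}{N}$ closed; and from $\restr{\eta^\alpha}{V}=0$ we get $V\subseteq\bigcap_\alpha\ker\eta^\alpha$, so the leaves of $V$ sit inside the slices $\{x=\text{const}\}$ and $\bar V:=\restr{V}{N}$ is an integrable $nk$-dimensional distribution on $N$. One checks that $(\bar\omega^\beta,\bar V)$ is a $k$-symplectic structure on the $n(k+1)$-dimensional manifold $N$: closedness and $\restr{\bar\omega^\beta}{\bar V\times\bar V}=0$ are inherited, while $\bigcap_\beta\ker\bar\omega^\beta=\{0\}$ because on $M$ one has $\bigcap_\beta\ker\omega^\beta=\langle\Reeb_1,\dots,\Reeb_k\rangle$, a subspace transverse to $TN$. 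Applying the Darboux theorem for $k$-symplectic manifolds \cite{awa92} on $N$ produces coordinates $(\bar y^i,\bar y_i^\alpha)$ with $\bar\omega^\beta=\d\bar y^i\wedge\d\bar y_i^\beta$ and $\bar V=\langle\partial/\partial\bar y_i^\alpha\rangle$; pulling back, $(x^\alpha,\,y^i:=\bar y^i\circ p,\,y_i^\alpha:=\bar y_i^\alpha\circ p)$ is a chart near the point in which $\eta^\alpha=\d x^\alpha$ and $\omega^\beta=p^\ast(\d\bar y^i\wedge\d\bar y_i^\beta)=\d y^i\wedge\d y_i^\beta$.

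The delicate point, which I expect to be the main obstacle, is to get $V=\langle\partial/\partial y_i^\alpha\rangle$ on the whole neighbourhood rather than only along $N$: this amounts to $V$ being invariant under the local flows of the $\Reeb_\alpha$ (equivalently $\L_{\Reeb_\alpha}\Gamma(V)\subseteq\Gamma(V)$), so that $V$ itself descends along $p$ to $\bar V$. I would try to extract this by a Cartan-calculus computation in the spirit of the one used for $[\Reeb_\alpha,\Reeb_\beta]$, feeding in $\restr{\eta^\alpha}{V}=0$, $\restr{\omega^\alpha}{V\times V}=0$ and the dimension information of condition~(2); this is the heart of the argument. If one is content to take the $k$-symplectic Darboux theorem as known, the remaining bookkeeping is routine and this invariance is essentially all that is left; a fully self-contained proof would in addition reprove that theorem by induction on $n$, splitting off one block $(y^i;y_i^1,\dots,y_i^k)$ at a time while keeping the complementary directions adapted to $V$.
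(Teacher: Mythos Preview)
The paper does not actually prove this theorem: its ``proof'' consists solely of the sentence ``The proof of this theorem can be found in \cite{dLe98}.'' So there is no in-paper argument to compare yours against; any assessment has to be of your proposal on its own terms, possibly against what one expects the cited proof to do.

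Your reduction strategy---straighten the commuting Reeb fields, observe that the $\omega^\beta$ are basic for the projection onto a slice $N$, and invoke the $k$-symplectic Darboux theorem on $N$---is natural and the bookkeeping you outline (Poincar\'e lemma for $\eta^\alpha$, commutativity of the $\Reeb_\alpha$, basicness of $\omega^\beta$) is correct. You are also right that the only substantive issue is whether $V$ is invariant under the Reeb flows, so that it descends to $\bar V$ on $N$.

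However, the Cartan-calculus attack you propose for this step does not obviously succeed. For $Y,Y'\in\Gamma(V)$ one easily gets $\eta^\beta([\Reeb_\alpha,Y])=0$, but expanding $0=\d\omega^\beta(\Reeb_\alpha,Y,Y')$ and using $i_{\Reeb_\alpha}\omega^\beta=0$, $\restr{\omega^\beta}{V\times V}=0$ only yields the symmetry
\[
\omega^\beta\big([\Reeb_\alpha,Y],Y'\big)=\omega^\beta\big([\Reeb_\alpha,Y'],Y\big),
\]
not the vanishing you need. Since $V\subsetneq\bigcap_\beta\ker\eta^\beta$ has codimension~$n$ there, and $V$ is genuinely extra data (not determined by $(\eta^\alpha,\omega^\alpha)$ alone), Reeb-invariance of $V$ is not a formal consequence of conditions~(1)--(2) via these identities by themselves. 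The argument in \cite{dLe98} proceeds somewhat differently, building the adapted chart by exploiting the integrable foliation $V$ together with the transverse $k$-symplectic data rather than by first quotienting along the Reeb directions; if you want to push your route through, you will need an independent reason---beyond the contractions you list---for $\L_{\Reeb_\alpha}\Gamma(V)\subseteq\Gamma(V)$, or else reorganise the construction so that this invariance is not required.
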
   
\begin{proof}
The proof of this theorem can be found in \cite{dLe98}.
\end{proof}

Given a $k$-cosymplectic manifold $(M,\eta^\alpha,\omega^\alpha,V)$,
we can define two vector bundle morphisms
\begin{equation*}
	\begin{array}{rccl}
	\widetilde\flat\colon & T^1_kM & \longrightarrow & (T^1_k)^\ast M \\
	& \calX & \longmapsto & (i_{X_1}\omega^1 + (i_{X_1}\eta^1)\eta^1,\dotsc,i_{X_k}\omega^k + (i_{X_k}\eta^k)\eta^k)
	\end{array}
\end{equation*}
and
\begin{equation*}
	\begin{array}{rccl}
	\flat\colon & T^1_kM & \longrightarrow & T^\ast M \\
	& \calX & \longmapsto & i_{X_\alpha}\omega^\alpha + (i_{X_\alpha}\eta^\alpha)\eta^\alpha
	\end{array}
\end{equation*}
			
{\bf Remark:} {\rm
Notice that $\flat = \mathrm{tr}(\widetilde\flat)$, and hence in the case $k=1$ we have that $\flat = \widetilde\flat$ which is the $\flat$ morphism defined for cosymplectic manifolds.}

Taking Darboux coordinates, the Reeb vector fields are
\begin{equation*}
    \mathcal{R}_\alpha = \frac{\partial}{\partial x^\alpha}.
\end{equation*}

%%%%%%%%%%%%%%%%%%%%%%%%%%%%%%%%%%%%%%%%%%%%%%%%%%%%%%%%%%%%%%%%
\subsection{Trivial \texorpdfstring{$k$}--cosymplectic manifolds}
%%%%%%%%%%%%%%%%%%%%%%%%%%%%%%%%%%%%%%%%%%%%%%%%%%%%%%%%%%%%%%%%

A trivial example of $k$-cosymplectic manifold
is provided by
the cartesian product of the euclidean space $\R^k$ with a $k$-symplectic manifold.
Remember that a \textbf{$k$-symplectic manifold} is an $n(k+1)$-dimensional differentiable manifold $N$ endowed with a $k$-symplectic structure,
that is, a family
$(\varpi^1,\ldots, \varpi^k,\mathcal{V})$, where $\mathcal{V}$ is
an $nk$-dimensional integrable distribution in $N$ and
$\varpi^1,\ldots,\varpi^k$ are closed differentiable $2$-forms in $N$
satisfying that:
$\varpi^\alpha\Big\vert_{\mathcal{V}\times\mathcal{V}}=0, (1\leq \alpha\leq k)\,$,
and
$\displaystyle\bigcap_{\alpha=1}^k\ker \varpi^\alpha=\{0\}$.
Then, using the canonical projections
% let $(N,\varpi^\alpha,\mathcal{V})$ be an arbitrary $k$-symplectic manifold, and denoting by
\begin{equation*}
	\pi_{\R^k}\colon \R^k\times N\longrightarrow\R^k,\qquad \pi_N\colon\R^k\times N\longrightarrow N
\end{equation*}
we can define differential forms
\begin{equation*}
	\eta^\alpha = \pi_{\R^k}^\ast(\d x^\alpha),\qquad
	\omega^\alpha = \pi_N^\ast\varpi^\alpha,
\end{equation*}
and the distribution $\mathcal{V}$ in $N$ defines a distribution $V$ in $M=\R^k\times N$ in a natural way.
All conditions given in Definition \ref{dfn-k-cosymplectic-manifold} are satisfied, and hence $M=\R^k\times N$ endowed with the $k$-cosymplectic structure $(\eta^\alpha,\omega^\alpha,V)$ is a $k$-cosymplectic manifold.

\medskip

Then,
the simplest model of $k$-cosymplectic manifold is the so called
\textbf{stable cotangent bundle of $k^1$-covelocities}
of an $n$-dimensional manifold $Q$,
denoted as
$\R^k \times (T^1_k)^\ast Q$,
where $(T^1_k)^\ast Q$ is the Whitney sum of $k$ copies of the cotangent bundle of $Q$,
i.e.
$(T^1_k)^\ast Q=
T^\ast Q\oplus_Q\overset{(k)}{\dotsb}\oplus_Q T^\ast Q$
(compare with the definition of the tangent bundle of $k^1$-velocities).
Thus, the elements of $\R^k\times (T^1_k)^\ast Q$ are of the form
$(x,\nu_{1_q},\dotsc,\nu_{k_q})$
where $x\in\R^k$, $q\in Q$ and $\nu_{\alpha_q}\in T^\ast_qQ$
where $1\leq\alpha \leq k$.

In the following diagram we collect the projections we use from now on:
{\small 
\begin{equation*}
	\xymatrix{ &&&& \R^k\times (T^1_k)^\ast Q \ar[rr]^{\overline{\pi}_2} \ar[ddrr]^{(\pi_Q)_1} \ar[dd]^{(\pi_Q)_{1,0}} \ar[ddll]_{\overline{\pi}_1} \ar@/_2pc/[ddllll]_{\pi_1^\alpha} \ar@/^2.5pc/[rrrr]^{\pi_2^\alpha} && (T^1_k)^\ast Q \ar[rr]^{\pi^{k,\alpha}} \ar[dd]^{\pi^k} && T^\ast Q \ar[ddll]_{\pi} \\\\
	\R && \R^k \ar[ll]_{\pi^\alpha} && \R^k\times Q \ar[ll]_{\pi_{\R^k}} \ar[rr]^{\pi_Q} && Q}
\end{equation*}}%

If $(q^i)$, with $1\leq i\leq n$,
is a local coordinate system defined on an open set $U\subset Q$,
the induced \textbf{local coordinates}
$(x^\alpha,q^i,p_i^\alpha)$,
$1\leq i \leq n$, $1\leq \alpha\leq k$
on $\R^k\times (T^1_k)^\ast U = \left((\pi_Q)_1\right)^{-1}(U)$
are given by
\begin{align*}
	x^\alpha(x,\nu_{1_q},\dotsc,\nu_{k_q}) &= x^\alpha(x) = x^\alpha,
	\\
	q^i(x,\nu_{1_q},\dotsc,\nu_{k_q}) &= q^i(q),
	\\
	p_i^\alpha(x,\nu_{1_q},\dotsc,\nu_{k_q}) &=
	\nu_{\alpha_k} \bigg(\left.\frac{\partial}{\partial q^i}\right|_q\bigg).
\end{align*}

Thus,
$\R^k\times (T^1_k)^\ast Q$
is endowed with a $k$-cosymplectic structure and thus
it is a $k$-cosymplectic manifold of dimension $k+n(k+1)$,
which has the structure of a vector bundle over~$Q$
with the projection $(\pi_Q)_1$.

On $\R^k\times (T^1_k)^\ast Q$ we can define a family of canonical forms:
\begin{equation*}
	\eta^\alpha = (\pi_1^\alpha)^\ast\d x,\qquad
	\theta^\alpha = (\pi_2^\alpha)^\ast\theta,\qquad
	\omega^\alpha = (\pi_2^\alpha)^\ast\omega,
\end{equation*}
with $1\leq \alpha \leq k$, being
$\pi_1^\alpha\colon \R^k\times (T^1_k)^\ast Q\to \R$ and
$\pi^\alpha_2\colon \R^k\times (T^1_k)^\ast Q\to T^\ast Q$
the projections defined by
\begin{equation*}
	\pi_1^\alpha(x,\nu_{1_q},\dotsc,\nu_{k_q}) = x^\alpha,\qquad
	\pi_2^\alpha(x,\nu_{1_q},\dotsc,\nu_{k_q}) = \nu_{\alpha_q}
\end{equation*}
and $\theta$ and $\omega$ are the canonical Liouville and symplectic forms on $T^\ast Q$, respectively.
Observe that, since $\omega = -\d\theta$,
then $\omega^\alpha = -\d\theta^\alpha$.

If we consider a local coordinate system $(x^\alpha,q^i,p^\alpha_i)$ on $\R^k\times (T^1_k)^\ast Q$, the \textbf{canonical forms} $\eta^\alpha$, $\theta^\alpha$ and $\omega^\alpha$ have the following local expressions:
\begin{equation*}
	\eta^\alpha = \d x^\alpha,\qquad \theta^\alpha = p_i^\alpha\d q^i,\qquad \omega^\alpha = \d q^i\wedge\d p^\alpha_i\ .
\end{equation*}
Moreover, let $V= \ker T (\pi_Q)_{1,0}$.
In local coordinates,
the forms $\eta^\alpha$ and $\omega^\alpha$ are closed,
and the following relations hold:
\begin{enumerate}[(1)]
\item
$\d x^1\wedge\dotsb\wedge\d x^k \neq 0$,
$\restr{\d x^\alpha}{V} = 0$,
$\restr{\omega^\alpha}{V\times V} = 0$,
\item
$\left( \bigcap_{\alpha=1}^k\ker\d x^\alpha \right)
\cap
\left( \bigcap_{\alpha = 1}^k\ker \omega^\alpha \right)
=
\{0\}$,
$\dim\left( \bigcap_{\alpha = 1}^k\ker \omega^\alpha \right) = k$.
\end{enumerate}

{\bf Remark:} {\rm
Notice that the canonical forms
on $(T^1_k)^\ast Q$ and $\R^k\times (T^1_k)^\ast Q$
are $(\overline{\pi}_2)^\ast$-related.}

%%%%%%%%%%%%%%%%%%%%%%%%%%%%%%%%%%%%%%%%%%%%%%%%%%%%%%%%%%%%%%%%
\section{\texorpdfstring{$k$}--precosymplectic manifolds}
\label{kpresec}
%%%%%%%%%%%%%%%%%%%%%%%%%%%%%%%%%%%%%%%%%%%%%%%%%%%%%%%%%%%%%%%%

In the same way as
% presymplectic manifolds generalize symplectic manifolds,
% and
$k$-presymplectic manifolds generalize $k$-symplectic manifolds,
$k$-precosymplectic manifolds are a generalization of $k$-cosymplectic manifolds
when some degeneracy is accepted in the 2-forms of the structure.
		
\begin{definition}
\label{dfn-singular-k-co}
Let $M$ be a differentiable manifold  of dimension $k(n+1)+n-\ell$
(with  $1\leq\ell\leq nk$). A
\textbf{$k$-precosymplectic structure} in $M$ is a family
$(\eta^\alpha,\Omega^\alpha,V)$, $1\leq \alpha\leq k$, where $\eta^\alpha$ are closed
$1$-forms in $M$, $\omega^\alpha$ are closed $2$-forms in $M$ such that
${\rm rank}\,\omega^\alpha=2r_\alpha$, with $1\leq r_\alpha\leq n$,
and $V$ is an integrable $nk$-dimensional distribution in $M$ satisfying that:
\begin{enumerate}[(1)]
\item
$\eta^1\wedge\dots\wedge \eta^k\neq 0$,\quad $\eta^\alpha\vert_V=0,\quad \omega^\alpha\vert_{V\times V}=0,$
\item
$\displaystyle \dim\,\Big({\bigcap_{\alpha=1}^{k}}\ker \omega^\alpha_p\Big)\geq k \ \ \mbox{\rm (for every $p\in M$)}  \ .$
\end{enumerate}
A manifold $M$ endowed with a $k$-precosymplectic structure is said to be a
\textbf{$k$-precosymplectic manifold}.
\end{definition}

In particular, if $k=1$, then $\dim M = 2n+1-\ell$ and $(M,\eta^1,\omega^1)$ is a precosymplectic manifold as is defined in \cite{IM-92}, and the so-called gauge distribution is given by $\ker\omega^1\cap\ker\eta^1$.

\noindent\textbf{Example} \
As in the regular case,
a simple example of $k$-precosymplectic manifold
can be constructed
from a $k$-presymplectic manifold.
Recall that a {\bf $k$-presymplectic manifold} is a family $(P, \varpi^\alpha, \mathcal{V})$ where $\varpi^\alpha$ are closed 2-forms in $P$ and $\mathcal{V}$ is a $nk$-dimensional integrable distribution satisfying $\restr{\varpi^\alpha}{\mathcal{V}\times\mathcal{V}} = 0$ for every $1\leq\alpha\leq k$.

Under these hypothesis, the product manifold
$\R^k\times P$
is a $k$-precosymplectic manifold taking
$\eta^\alpha = \tau^\ast\d t^\alpha$
where $t^\alpha$ are the canonical coordinates in $\R^k$ and $\tau$ is the canonical projection
$\R^k\times P\overset{\tau}{\longrightarrow}\R^k$
and $\omega^\alpha = \pi^\ast\varpi^\alpha$
where $\pi$ is the canonical projection $\R^k\times P\overset{\pi}{\longrightarrow}P$.
In the description of the algorithm,
we will ask our manifolds to be of this type
in order to have the problem well defined.

\medskip
			
In Definition \ref{dfn-singular-k-co} we have imposed the condition of the existence of a distribution $V$ because it is precisely the existence of this distribution what ensures the existence of Darboux coordinates in the regular case.
It is still an open problem to characterize the conditions for their existence in the singular case.
%\cite{GRR-2018}.
However, from now on we will assume the existence of Darboux coordinates around every point (as it happens, for instance, in the previous example).
In more detail,
%[$k$-precosymplectic Darboux Theorem]\label{Darboux k-precosymp}
let  $M$ be a  $k$-precosymplectic manifold such that
${\rm rank}\,\omega^\alpha=2r_\alpha$, with $1\leq r_\alpha\leq n$ and
$\displaystyle d=kn-\sum_{\alpha=1}^kr_\alpha-\ell$;
around every point $p\in M$, we assume the existence of a local chart of coordinates
$$
(\mathcal{U}_p;x^\alpha,y^i,y^\alpha_{i_\alpha},z^j)\quad ;\quad
1\leq\alpha\leq k  \ ,\  1\leq i\leq n\ ,\
 i_\alpha\in I_\alpha\subseteq\{1,\dots,n\}\ ,\ 1\leq j\leq d\ ,
$$
 such that
\begin{eqnarray*}
\eta^\alpha\vert_{\mathcal{U}_p}=
\d x^\alpha
,&&\quad
\omega^\alpha\vert_{\mathcal{U}_p}=
\d y^{i_\alpha}\wedge\,\d y^\alpha_{i_\alpha}
\\
\displaystyle
V\vert_{\mathcal{U}_p}=
\left\langle\frac{\partial}{\partial y^\alpha_{i_\alpha}},
\frac{\partial}{\partial z^j}\right\rangle
,&&
\quad\displaystyle
\left[ ({\bigcap_{\alpha=1}^{k}}\ker \eta^\alpha) \cap
({\bigcap_{\alpha=1}^{k}}\ker\omega^\alpha) \right]
\big\vert_{\mathcal{U}_p}=
\left\langle\frac{\partial}{\partial z^j}\right\rangle \ .
\end{eqnarray*}

To discuss Hamilton's equations we will need the Reeb vector fields
$\mathcal{R}_\alpha$, defined by
Eq.~\eqref{eq-Reeb}.
We already mentioned that they are unique in the
$k$-cosymplectic case.
Now we are going to prove their existence in the singular case,
although they will not be unique.
			
\begin{proposition}
    Given a $k$-precosymplectic manifold $(M,\omega^\alpha,\eta^\alpha,V)$ with Darboux charts,
    there exists a family $Y_1,\dotsc,Y_k\in\X(M)$ of vector fields satisfying
    \begin{equation*}
    	\begin{cases}
    	i_{Y_\alpha}\omega^\beta = 0,\\
    	i_{Y_\alpha}\eta^\beta = \delta_\alpha^\beta.
    	\end{cases}
    \end{equation*}
\end{proposition}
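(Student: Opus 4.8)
The natural strategy is to work entirely in the given Darboux chart $(\mathcal{U}_p; x^\alpha, y^i, y^\alpha_{i_\alpha}, z^j)$ and construct the required vector fields there explicitly, then glue. In the chart, the obvious candidate is $Y_\alpha = \partial/\partial x^\alpha$: since $\eta^\beta|_{\mathcal{U}_p} = \d x^\beta$ one immediately gets $i_{Y_\alpha}\eta^\beta = \delta_\alpha^\beta$, and since $\omega^\beta|_{\mathcal{U}_p} = \d y^{i_\beta}\wedge \d y^\beta_{i_\beta}$ involves no $\d x^\alpha$ factor, $i_{Y_\alpha}\omega^\beta = 0$. So locally the statement is essentially tautological from the existence of Darboux coordinates, which we are assuming. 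The content of the proposition is therefore the passage from local solutions to a global one.

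For the globalization I would use a partition of unity argument, but with care because the two conditions behave differently under convex combinations. Cover $M$ by Darboux charts $\{\mathcal{U}_a\}$, each carrying local vector fields $Y^{(a)}_\alpha = \partial/\partial x^\alpha$ satisfying $i_{Y^{(a)}_\alpha}\omega^\beta = 0$ and $i_{Y^{(a)}_\alpha}\eta^\beta = \delta_\alpha^\beta$; pick a subordinate partition of unity $\{\rho_a\}$ and set $Y_\alpha = \sum_a \rho_a\, Y^{(a)}_\alpha$. The linearity of interior product in the vector field entry gives $i_{Y_\alpha}\omega^\beta = \sum_a \rho_a\, i_{Y^{(a)}_\alpha}\omega^\beta = 0$ and $i_{Y_\alpha}\eta^\beta = \sum_a \rho_a\,\delta_\alpha^\beta = \delta_\alpha^\beta$ since $\sum_a \rho_a = 1$. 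Both conditions are affine in $Y_\alpha$ with the $\eta^\beta$-condition having a constant inhomogeneous term that is preserved precisely because the weights sum to one, so the convex-combination trick works without obstruction.

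The main point to be careful about — and the only place a subtlety could hide — is whether the Darboux charts genuinely exist; but that is an explicit standing assumption in the paragraph preceding the proposition, so we may invoke it. A secondary technical point is that in the $k$-precosymplectic (singular) case the resulting $Y_\alpha$ need not lie in any canonical complement and, as the paper already notes, are far from unique: different choices of charts or partition of unity yield different Reeb families, and one can further add to any $Y_\alpha$ an arbitrary vector field in $\bigcap_\beta \ker\omega^\beta \cap \bigcap_\beta \ker\eta^\beta$ (spanned locally by the $\partial/\partial z^j$). I would state this non-uniqueness as a concluding remark rather than proving it, since it follows directly from linearity of the defining equations. Thus the proof reduces to: (i) exhibit $\partial/\partial x^\alpha$ as a local solution in each Darboux chart; (ii) patch with a partition of unity, using that $\sum_a\rho_a=1$ to preserve the normalization $i_{Y_\alpha}\eta^\beta=\delta_\alpha^\beta$.
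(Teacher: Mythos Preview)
Your proposal is correct and follows essentially the same approach as the paper: exhibit $\partial/\partial x^\alpha$ as the local solution in each Darboux chart and then glue via a partition of unity, using linearity of the interior product and $\sum_a\rho_a=1$ to verify both conditions globally. The paper's proof is organized identically, and your remark on non-uniqueness (adding sections of $\bigcap_\beta\ker\omega^\beta\cap\bigcap_\beta\ker\eta^\beta$) is also exactly what the paper observes immediately afterward.
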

\begin{proof}
    Consider a partition of unity
    $\{(\U_\lambda,\psi_\lambda)\}_{\lambda\in\Lambda}$
    on $M$ such that on every $\U_b$ we have Darboux coordinates
    $\{x^\alpha_\lambda,y^i_\lambda,y^\alpha_{i_\alpha,\lambda};z^j_\lambda\}$.
    Consider now the local vector fields
    $\displaystyle Y^\lambda_\alpha = \frac{\partial}{\,\partial x^\alpha_\lambda}$.
    These vector fields satisfy
    \begin{equation*}
    	\begin{cases}
    	i_{Y^\lambda_\alpha}\omega^\beta = 0,\\
    	i_{Y^\lambda_\alpha}\eta^\beta = \delta^\beta_\alpha
    	\end{cases}
    \end{equation*}
    on $\U_\lambda$.
    Using these vector fields, we can define global vector fields
    \begin{equation*}
    	\widetilde{Y}^\lambda_\alpha(p) =
    	\begin{cases}
    	\psi_\lambda(p)Y^\lambda_\alpha(p) & \mbox{if }p\in \U_\lambda,\\
    	0 & \mbox{if }p\notin\U_\lambda.
    	\end{cases}
    \end{equation*}
    With these global vector fields we can construct global vector fields
    $Y_\alpha = \sum_{\lambda\in\Lambda} \widetilde{Y}^\lambda_\alpha$
    which satisfy
    \begin{equation*}
    	\begin{cases}
    	i_{Y_\alpha}\omega^\beta = 0,\\
    	i_{Y_\alpha}\eta^\beta = \delta^\beta_\alpha,
    	\end{cases}
    \end{equation*}
    for every $\alpha,\beta = 1,\dotsc,k$.
\end{proof}
			
The vector fields provided by this proposition are not necessarily unique. In fact, the Reeb vector fields can be written in Darboux coordinates as
\begin{equation*}
    \mathcal{R}_\alpha = \frac{\partial}{\partial x^\alpha} + D^j_\alpha\frac{\partial}{\partial z^j}
\end{equation*}
for arbitrary coefficients $D^j_\alpha$.

{\bf Remark:} 
\label{choice-reeb}
\rm
Nevertheless,
sometimes one can impose some extra conditions
that determine the Reeb vector fields uniquely.
Consider for instance the situation where
the $k$-precosymplectic manifold $M$ is of the type
$\R^k\times P$, where $P$ is a $k$-presymplectic manifold.

The canonical vector fields
$\frac{\partial}{\partial x^\alpha}$ of $\R^k$
can be lifted canonically to vector fields on the product $\R^k \times P$.
These vector fields are denoted also by $\frac{\partial}{\partial x^\alpha}$
and are a family of Reeb vector fields
of the $k$-precosymplectic manifold $\R^k \times M$.

%%%%%%%%%%%%%%%%%%%%%%%%%%%%%%%%%%%%%%%%%%%%%%%%%%%%%%%%%%%%%%%%
\section{\texorpdfstring{$k$}--cosymplectic formulation of nonautonomous field theories}
\label{section-k-cosymplectic}
%%%%%%%%%%%%%%%%%%%%%%%%%%%%%%%%%%%%%%%%%%%%%%%%%%%%%%%%%%%%%%%%
	
The $k$-cosymplectic formulation allows to describe field theories
where the Lagrangian or the Hamiltonian functions
depend explicitly on the coordinates of the basis
(space-time coordinates or similar).
Therefore it is a generalization of the $k$-symplectic formulation,
where these coordinates do not appear explicitly
\cite{LeSaVi2016}.
It is also the generalization of the standard cosymplectic formalism for non-autonomous mechanics
\cite{CLM91}.

Next we review the main features of the Lagrangian and Hamiltonian formalisms in this formulation
(see \cite{LeSaVi2016,dLe98,dLe01,RRSV-2011} for details).

%%%%%%%%%%%%%%%%%%%%%%%%%%%%%%%%%%%%%%%%%%%%%%%%%%%%%%%%%%%%%%%%
\subsection{\texorpdfstring{$k$}--cosymplectic Hamiltonian formalism}
\label{kHamForm}
			
\begin{definition}
Let $(M,\omega^\alpha,\eta^\alpha,V)$ be a $k$-cosymplectic manifold and let ${\gamma\in\Omega^1(M)}$ be a closed $1$-form on $M$,
which will be called the \textbf{Hamiltonian $1$-form}.
The family $(M,\omega^\alpha,\eta^\alpha,V, \gamma)$ is a \textbf{$k$-cosymplectic Hamiltonian system}.

A $k$-vector field
$\calX = (X_1,\dotsc,X_k)\in\X^k(M)$
is said to be a
\textbf{$k$-cosymplectic Hamiltonian $k$-vector field}
if it is solution to the system of equations
\begin{equation}
\label{eq-Hamilton-k-cosymplectic}
	\begin{cases}
		i_{X_\alpha}\omega^\alpha = \gamma - \gamma(\Reeb_\alpha)\eta^\alpha \\
		i_{X_\beta}\eta^\alpha = \delta^\alpha_\beta.
	\end{cases}
\end{equation}
We use the notation $\calX\in\X^k_h(M)$.
\end{definition}

Notice that when $k=1$ we recover the equation of motion for a cosymplectic Hamiltonian system
\cite{chi94,dLe02}.

As $\gamma$ is a closed $1$-form, by Poincar\'e's Lemma there exists a local function $h$
such that $\gamma=\d h$.

Using the $\flat$ morphism defined in the previous sections, we can write equations \eqref{eq-Hamilton-k-cosymplectic} as
\begin{equation}\label{eq-Hamilton-k-cosymplectic-2}
	\begin{cases}
	\flat(\calX) = \gamma + (1-\gamma(\Reeb_\alpha))\eta^\alpha \\
	i_{X_\beta}\eta^\alpha = \delta^\alpha_\beta.
	\end{cases}
\end{equation}
Consider an arbitrary $k$-vector field $\calX = (X_\alpha)\in\X^k(M)$, which in a canonical chart is expressed as
\begin{equation*}
	X_\alpha =
	(A_\alpha)^\beta \frac{\partial}{\partial x^\beta} +
	(B_\alpha)^i \frac{\partial}{\partial q^i} +
	(C_\alpha)^\beta_i \frac{\partial}{\partial p_i^\beta}, \qquad
	1\leq\alpha\leq k.
\end{equation*}
Imposing equations \eqref{eq-Hamilton-k-cosymplectic}, we get the conditions
\begin{equation}
\label{eq-k-co-coordinates}
	\begin{cases}
	(A_\alpha)^\beta = \delta_\alpha^\beta,\\
	\dfrac{\partial h}{\partial p_i^\alpha} = (B_\alpha)^i,\\
	\dfrac{\partial h}{\partial q^i} =
	  -\sum_{\beta = 1}^k(C_\beta)^\beta_i,
	\end{cases}
\end{equation}
where $1\leq i \leq m$ and $1\leq \alpha\leq k$.
Notice that these conditions do not depend on the choice of the
Reeb vector fields.
However, we need the Reeb vector fields to write the system of equations~\eqref{eq-Hamilton-k-cosymplectic}.
In remark \ref{choice-reeb} we discussed how to choose a family of Reeb vector fields.

If
$\psi:\R^k\to \R^k\times (T^1_k)^{\;*}Q$,
locally given by
$\psi(x)=(\psi^\alpha(x),\psi^i(x),\psi^\alpha_i(x))$,
is an integral section of $\mathcal{X}$,
then, from (\ref{eq-k-co-coordinates}),
we obtain that $\psi$ is a solution to the Hamiltonian field equations
$$
%\begin{equation}\label{he}
\frac{\partial h}{\partial q^i} =
\sum_{\alpha=1}^k \frac{\partial\psi^\alpha_i}{\partial x^\alpha}
\:, \quad
\frac{\partial h}{\partial p^\alpha_i} =
\frac{\partial\psi^i}{\partial x^\alpha} \:.
%\end{equation}
$$

%%%%%%%%%%%%%%%%%%%%%%%%%%%%%%%%%%%%%%%%%%%%%%%%%%%%%%%%%%%%%%%%
\subsection{\texorpdfstring{$k$}--cosymplectic Lagrangian formalism}
\label{kLagForm}

Consider the tangent bundle of $k$-velocities $T^1_kQ$,
with coordinates $(q^i , v^i_\alpha)$.
For a vector $X_q$ at $Q$,
its \textbf{vertical $\alpha$-lift}
$(X_q)^\alpha$ is defined as the vector on $T_k^1Q$ given by
$$
(X_q)^\alpha({v_1}_q,\ldots , {v_k}_q) =
\displaystyle\frac{d}{ds} (
{v_1}_q,\ldots,{v_{\alpha-1}}_q,{v_\alpha}_q+sX_q,{v_{\alpha+1}}_q,
\ldots,{v_k}_q)_{\vert_{s=0}}
\,,
$$
where
$({v_1}_q,\ldots,{v_k}_q) \in T^1_kQ$.
In local coordinates, if $\displaystyle X_q=a^i\derpar{}{q^i}\Big\vert_q$, then
$\displaystyle
(X_q)^\alpha =  a^i \displaystyle\frac{\partial}{\partial
v^i_\alpha}\Big\vert_v$.

The \textbf{canonical $k$-tangent structure} on $T^1_kQ$ is the set
$(J^1,\ldots,J^k)$ of tensor fields  of type $(1,1)$ defined by
$$
J^\alpha(v)(Z_{v}) =
(T_v\tau^Q (Z_{v}))^\alpha
\quad
\mbox{for every}\
Z_{v}\in T_{v}(T^1_kQ)\,,\
v=({v_1}_q,\ldots , {v_k}_q)\in T^1_kQ
\:.
$$
In local  coordinates we have that
$\displaystyle J^\alpha=\displaystyle\frac
{\displaystyle\partial}{\displaystyle\partial v^i_\alpha} \otimes dq^i$.

Furthermore,
we have the \textbf{Liouville vector fields}
$\Delta_\alpha\in{\mathfrak X}(T^1_kQ)$,
which are the infinitesimal generators of the flows
$$
\begin{array}{ccl}
\R \times T^1_kQ & \longrightarrow &
 T^1_kQ  \\
\noalign{\medskip} (s,({v_1}_q,\ldots , {v_k}_q)) &
\longrightarrow & ({v_1}_q,\ldots,{v_{\alpha-1}}_q, e^s \,
{v_\alpha}_q,{v_{\alpha+1}}_q, \ldots,{v_k}_q)\, .
\end{array}
$$
We denote $\Delta=\sum_{\alpha}\Delta_\alpha$.
In local coordinates
$\displaystyle
\Delta_\alpha =
\sum_{i} v^i_\alpha  \frac{\partial}{\,\partial v_\alpha^i}
$.

Consider now the phase space $\R^k\times T^1_kQ$.
The canonical structures $J^\alpha$ and the Liouville vector fields $\Delta_\alpha$
can be trivially extended from $T^1_kQ$ to $\R^k\times T^1_kQ$,
and are denoted also by $J^\alpha$ and $\Delta_\alpha$.
If $(x^\alpha,q^i,v^i_\alpha)$ are the natural coordinates in $\R^k\times T^1_kQ$,
their local expressions are the same as above.
Using them, we can define:
			
\begin{definition}
A $k$-vector field $\calX\in\X^k(\R^k\times T^1_kQ)$ is a
\textbf{second order partial differential equation} (\textsc{sopde})
if it satisfies the following conditions:
\begin{enumerate}[(1)]
\item
$J^\alpha(X_\alpha) = \Delta_\alpha$
for $\alpha$ fixed, with $1\leq\alpha\leq k$;
\item
$i_{X_\beta}\eta^\alpha = \delta^\alpha_\beta$
for every $1\leq\alpha,\beta\leq k$.
\end{enumerate}
\end{definition}

The local expression
of a {\sc sopde} $\mathcal{X}=(X_1 ,\ldots,X_k) $ is
\begin{equation}
\label{localsode2}
X_\alpha=
\frac{\partial}{\partial x^\alpha}+v^i_\alpha\frac{\displaystyle\partial} {\displaystyle
\partial q^i}+(X_\alpha)^i_\beta \frac{\displaystyle\partial} {\displaystyle \partial v^i_\beta}\ .
\end{equation}

\begin{definition}
\label{de652}
Let $\phi\colon\R^k \rightarrow Q$, the
\textbf{first prolongation}
$\phi^{[1]}$ of $\phi$ is the map
$$
\begin{array}{rcl}
\phi^{[1]}\colon\R^k &\longrightarrow &\R^k\times T^1_kQ \\ x &
\longmapsto &
\displaystyle
(x,j^1_0\phi_x) \equiv
\left(x,
T_x\phi \left(\frac{\partial}{\partial x^1} \Big\vert_x\right),
\ldots ,
T_x\phi \left(\frac{\partial}{\partial x^k}\Big\vert_x \right)
\right)
\end{array}
$$
The section $\phi^{[1]}$ is said to be a
\textbf{holonomic section}.
In coordinates
$$
\phi^{[1]}(x^1,\dots, x^k) =
\left( x^1,\dots,x^k,\phi^i(x),
\frac{\partial\phi^i}{\partial x^\alpha}(x) \right) \,.
$$
\end{definition}

\begin{proposition}\label{lem0}
If $\mathcal{X}=(X_1 ,\ldots,X_k) $ is an integrable {\sc sopde}, then a map $\psi:\R^k \to \R^k\times T^1_kQ$, given by
$\psi(x)=(\psi^\alpha(x),\psi^i(x),\psi^i_\alpha(x))$, is an integral section
of $(X_1 ,\ldots,X_k) $ if, and only if, 
\begin{equation}\label{nn}
\psi^\alpha(x)=x^\alpha \, , \quad \psi^i_\alpha(x)=\frac{\displaystyle\partial \psi^i}
{\displaystyle\partial x^\alpha}(x) \, , \quad \frac{\displaystyle\partial^2 \psi^i}
{\displaystyle\partial x^\alpha \displaystyle\partial x^\beta}(x)=(X_\alpha)^i_\beta(\psi(x)) \, .
\end{equation}
In this case, $\psi$ is a holonomic section.
\end{proposition}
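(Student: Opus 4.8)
The plan is to reduce the statement to a componentwise identification in the canonical chart $(x^\alpha,q^i,v^i_\alpha)$ of $\R^k\times T^1_kQ$. By the definition of integral section, $\psi$ is an integral section of $\mathcal{X}=(X_1,\dots,X_k)$ if and only if $T_x\psi\big(\tfrac{\partial}{\partial x^\alpha}\big|_x\big)=X_\alpha(\psi(x))$ for every $x$ and every $\alpha$ (together with the normalization $\psi(0)=p$ at the chosen base point). Writing $\psi(x)=(\psi^\beta(x),\psi^i(x),\psi^i_\beta(x))$, the left-hand side expands as
\begin{equation*}
T_x\psi\Big(\frac{\partial}{\partial x^\alpha}\Big|_x\Big)=\frac{\partial\psi^\beta}{\partial x^\alpha}(x)\,\frac{\partial}{\partial x^\beta}+\frac{\partial\psi^i}{\partial x^\alpha}(x)\,\frac{\partial}{\partial q^i}+\frac{\partial\psi^i_\beta}{\partial x^\alpha}(x)\,\frac{\partial}{\partial v^i_\beta},
\end{equation*}
while, using the local expression \eqref{localsode2} of a \textsc{sopde}, the right-hand side is
\begin{equation*}
X_\alpha(\psi(x))=\frac{\partial}{\partial x^\alpha}+\psi^i_\alpha(x)\,\frac{\partial}{\partial q^i}+(X_\alpha)^i_\beta(\psi(x))\,\frac{\partial}{\partial v^i_\beta}.
\end{equation*}

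Next I would equate the two expressions component by component, using that this comparison is an equivalence (the coordinate vector fields are a local frame). The $\partial/\partial x^\beta$-components give $\partial\psi^\beta/\partial x^\alpha=\delta^\beta_\alpha$, hence, after the standard normalization of the base point (vanishing $\R^k$-component), $\psi^\alpha(x)=x^\alpha$; the $\partial/\partial q^i$-components give $\partial\psi^i/\partial x^\alpha=\psi^i_\alpha$; and the $\partial/\partial v^i_\beta$-components give $\partial\psi^i_\beta/\partial x^\alpha=(X_\alpha)^i_\beta(\psi(x))$. Substituting the second relation $\psi^i_\beta=\partial\psi^i/\partial x^\beta$ into the third turns its left-hand side into $\partial^2\psi^i/\partial x^\alpha\partial x^\beta$, which is precisely the last equation in \eqref{nn}. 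Since each of these steps is reversible, the three conditions of \eqref{nn} hold if and only if $\psi$ is an integral section, and both implications are obtained at once.

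For the final assertion, assume \eqref{nn} and let $\phi\colon\R^k\to Q$ be the composition of $\psi$ with the canonical projection $\R^k\times T^1_kQ\to Q$, so that $\phi^i=\psi^i$ in coordinates. Then $\psi(x)=\big(x^\alpha,\phi^i(x),\partial\phi^i/\partial x^\alpha(x)\big)$, which is exactly $\phi^{[1]}(x)$ by Definition \ref{de652}; hence $\psi=\phi^{[1]}$ is a holonomic section.

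I do not anticipate a genuine obstacle: the argument is a routine coordinate computation. The two points that require a little care are (i) that in the \textsc{sopde} expression \eqref{localsode2} the index $\alpha$ in the term $v^i_\alpha\,\partial/\partial q^i$ is \emph{not} summed, so that the $\partial/\partial q^i$-component of $X_\alpha(\psi(x))$ is precisely $\psi^i_\alpha(x)$ and not a sum over $\alpha$; and (ii) that the role of the integrability hypothesis on $\mathcal{X}$ is only to guarantee that integral sections actually exist (through the relevant points), so that the equivalence is not vacuous — the coordinate equivalence \eqref{nn} itself holding for an arbitrary \textsc{sopde}, integrable or not.
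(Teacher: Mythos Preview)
The paper states this proposition without proof, so there is nothing to compare against; your coordinate computation is correct and is precisely the natural argument one would supply. Your handling of the two subtle points---the unsummed $\alpha$ in the $v^i_\alpha\,\partial/\partial q^i$ term of \eqref{localsode2}, and the role of integrability as merely guaranteeing nonvacuity---is accurate, and the observation that $\partial\psi^\beta/\partial x^\alpha=\delta^\beta_\alpha$ only pins down $\psi^\alpha$ up to an additive constant (absorbed into the choice of base point) is a fair reading of a slight imprecision in the statement itself rather than a defect in your proof.
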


Observe that if $\mathcal{X}=(X_1 ,\ldots,X_k)$ is integrable, from
(\ref{nn}) we deduce that $(X_\alpha)^i_\beta=(X_\beta)_\alpha^i$.
	
A \textbf{Lagrangian function}
is a function $L\colon\R^k\times T^1_kQ\to\R$.
From it one can define a family of 1-forms
$\theta^1_L,\dots,\theta^k_L \in \Omega^1(\R^k\times T^1_kQ)$
as
\begin{equation*}
\theta^\alpha_L = \d L\circ J^\alpha,
\end{equation*}
and from these 1-forms one can define the so-called \textbf{Poincar\'e--Cartan} 2-forms
\begin{equation*}
\omega^\alpha_L = -\d\theta^\alpha_L.
\end{equation*}
These forms,
together with the vertical tangent distribution
$V = \ker T\,(\pi_{\R^k})_{1,0}$,
define a $k$-precosymplectic structure
$(\d x^\alpha,\omega^\alpha_L,V)$
on $\R^k\times T^1_kQ$.

\begin{definition}
Let $L$ be a Lagrangian function on $\R^k\times T^1_kQ$.
We say that $L$ is a {\bf regular Lagrangian} if, for every $1\leq\alpha, \beta\leq k$ and every $p\in\R^k\times T^1_kQ$, the matrix
$\displaystyle
\left(
\frac{\partial^2L} {\partial v^i_\alpha \partial v^j_\beta}
\right)_%
{ \genfrac{}{}{0pt}{}{1 \leq i \leq n}{1 \leq j \leq n} }
\!\!\!\!\!(p) $
is invertible.
Otherwise, $L$ is a {\bf singular Lagrangian}.
\end{definition}

\begin{proposition}
A Lagrangian function $L\colon \R^k\times T^1_kQ\to\R$ is regular if and only if $(\d x^\alpha,\omega^\alpha_L,V)$ is a $k$-cosymplectic structure on $\R^k\times T^1_kQ$.
\end{proposition}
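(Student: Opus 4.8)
Since the statement is local, the plan is to compute in the natural coordinates $(x^\alpha,q^i,v^i_\alpha)$ of $\R^k\times T^1_kQ$ and to use the Legendre transformation as the bridge to the canonical model. First I would record the local expressions. From $J^\alpha=\frac{\partial}{\partial v^i_\alpha}\otimes\d q^i$ one gets $\theta^\alpha_L=\d L\circ J^\alpha=\frac{\partial L}{\partial v^i_\alpha}\,\d q^i$, hence $\omega^\alpha_L=-\d\theta^\alpha_L=\d q^i\wedge\d\!\left(\frac{\partial L}{\partial v^i_\alpha}\right)$. Expanding the last differential gives in particular the identity $i_{\partial/\partial v^j_\beta}\,\omega^\alpha_L=-\frac{\partial^2 L}{\partial v^i_\alpha\,\partial v^j_\beta}\,\d q^i$, which is what ties the Hessian of $L$ to the degeneracy of the structure. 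The conditions of Definition~\ref{dfn-k-cosymplectic-manifold} that do not involve regularity ($\eta^1\wedge\dotsb\wedge\eta^k\neq0$, $\eta^\alpha|_V=0$, $\omega^\alpha_L|_{V\times V}=0$, and $V$ integrable of dimension $nk$) hold for every $L$ and were already recorded when the $k$-precosymplectic structure on $\R^k\times T^1_kQ$ was introduced, so the whole content lies in item~(2) of Definition~\ref{dfn-k-cosymplectic-manifold}.

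For ``$L$ regular $\Rightarrow$ $k$-cosymplectic'' I would use the Legendre map $FL\colon\R^k\times T^1_kQ\to\R^k\times(T^1_k)^\ast Q$, $(x^\alpha,q^i,v^i_\alpha)\mapsto\bigl(x^\alpha,q^i,\frac{\partial L}{\partial v^i_\alpha}\bigr)$. It covers the identity on $\R^k\times Q$ and its fibre derivative is the Hessian of $L$, so regularity of $L$ is precisely the statement that $FL$ is a local diffeomorphism. One checks directly that $FL^\ast\eta^\alpha=\d x^\alpha$ and $FL^\ast\theta^\alpha=\theta^\alpha_L$ (hence $FL^\ast\omega^\alpha=\omega^\alpha_L$), and that $TFL$ carries $V=\ker T(\pi_{\R^k})_{1,0}$ isomorphically onto the vertical distribution $\ker T(\pi_Q)_{1,0}$ of $\R^k\times(T^1_k)^\ast Q$. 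Hence, wherever $L$ is regular, $(\d x^\alpha,\omega^\alpha_L,V)$ is the pullback by a local diffeomorphism of the canonical $k$-cosymplectic structure on $\R^k\times(T^1_k)^\ast Q$ described earlier; since being $k$-cosymplectic is a local and diffeomorphism-invariant property, this settles the implication.

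For the converse I would argue by contraposition. If $L$ is singular at a point $p$, its Hessian at $p$ has a nonzero kernel vector $(b^j_\beta)$, i.e. $\sum_{j,\beta}\frac{\partial^2 L}{\partial v^i_\alpha\,\partial v^j_\beta}(p)\,b^j_\beta=0$ for all $i,\alpha$. Then $Z=b^j_\beta\,\frac{\partial}{\partial v^j_\beta}\in T_pM$ is a nonzero vertical vector, and by the identity of the first paragraph $i_Z\,\omega^\alpha_L(p)=0$ for every $\alpha$, while $i_Z\,\d x^\alpha=0$ because $Z$ is vertical; thus $Z$ lies in $\bigl(\bigcap_\alpha\ker\eta^\alpha\bigr)\cap\bigl(\bigcap_\alpha\ker\omega^\alpha_L\bigr)$ at $p$, contradicting item~(2) of Definition~\ref{dfn-k-cosymplectic-manifold}, so the structure is not $k$-cosymplectic. (One could equally argue directly in the regular case that $\bigcap_\alpha\ker\omega^\alpha_L$ is the $k$-dimensional span of the $FL$-images of the $\frac{\partial}{\partial x^\gamma}$, which meets $\bigcap_\alpha\ker\eta^\alpha$ only at $0$.) The computation itself is routine; the care needed is to read ``$L$ regular'' as invertibility of the full $nk\times nk$ Hessian $\bigl(\frac{\partial^2 L}{\partial v^i_\alpha\partial v^j_\beta}\bigr)$ — which is exactly the kernel exploited in the converse and the non-degenerate Jacobian exploited in the forward direction — and to verify that $FL$ really does transport all the canonical objects of $\R^k\times(T^1_k)^\ast Q$ onto $(\d x^\alpha,\omega^\alpha_L,V)$; I expect this structure-matching to be the main, and fairly modest, obstacle.
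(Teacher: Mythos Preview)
Your argument is correct. The converse direction --- producing a nonzero vertical vector in $\bigl(\bigcap_\alpha\ker\eta^\alpha\bigr)\cap\bigl(\bigcap_\alpha\ker\omega^\alpha_L\bigr)$ from a kernel vector of the Hessian --- is exactly the coordinate computation the paper has in mind, only made explicit. For the forward direction you take a genuinely different route: rather than verifying condition~(2) of Definition~\ref{dfn-k-cosymplectic-manifold} directly in coordinates, you observe that regularity makes the Legendre map a local diffeomorphism that pulls the canonical $k$-cosymplectic structure of $\R^k\times(T^1_k)^\ast Q$ back to $(\d x^\alpha,\omega^\alpha_L,V)$. This is cleaner and more conceptual than the paper's one-line ``take coordinates and check'' sketch, and it has the added virtue of explaining \emph{why} the structure is $k$-cosymplectic (it is isomorphic to the model) rather than merely certifying the kernel conditions. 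The paper's approach, by contrast, is shorter to state but leaves the actual verification of $\dim\bigl(\bigcap_\alpha\ker\omega^\alpha_L\bigr)=k$ to the reader. Your closing caveat about reading ``regular'' as invertibility of the full $nk\times nk$ Hessian is well placed: that is the reading under which both your Legendre-map argument and your kernel argument go through, and it is the standard one in this literature.
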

\begin{proof}
Taking coordinates, it is easy to see that the forms $\omega_L^\alpha$ are nondegenerate if and only if the matrix
$$
\left(
\frac{\partial^2L} {\partial v^i_\alpha \partial v^j_\beta}
\right)_%
{ \genfrac{}{}{0pt}{}{1 \leq i \leq n}{1 \leq j \leq n} }
\!\!\!\!\!(p)
$$
is invertible for every $1\leq\alpha, \beta\leq k$ and every $p\in\R^k\times T^1_kQ$.
\end{proof}

\begin{definition}
We say that a $k$-vector field $\calX$ of $\R^k\times T^1_kQ$ is a
\textbf{$k$-cosymplectic Lagrangian $k$-vector field}
if it is a solution to equations
\begin{equation}
\label{eq-E-L-k-cosymplectic}
\begin{cases}
i_{X_\alpha}\omega^\alpha_L =
\d E_L +\displaystyle \frac{\partial L}{\partial x^\alpha}\d x^\alpha,
\\
i_{X_\beta}\d x^\alpha =
\delta^\alpha_\beta,
\end{cases}
\end{equation}
where $E_L = \Delta(L)-L$.
We denote by $\X^k_L(\R^k\times T^1_kQ)$ the set of all $k$-cosymplectic Lagrangian $k$-vector fields.
\end{definition}
Equations \eqref{eq-E-L-k-cosymplectic} are called \textbf{$k$-cosymplectic Lagrangian equations}.
			
Notice that if $L$ is regular, then $(\d x^\alpha,\omega^\alpha_L,V)$ is a $k$-cosymplectic structure on $\R^k\times T^1_kQ$. We denote by $\Reeb_\alpha^L$ the corresponding Reeb vector fields.
Hence, if we write the $k$-cosymplectic Hamilton equations for the system
$(\R^k\times T^1_kQ,\d x^\alpha,\omega^\alpha_L,L)$
we get
\begin{equation}\label{eq-E-L-k-cosymplectic-2}
\begin{cases}
i_{X_\alpha}\omega^\alpha_L = \d E_L - \Reeb_\alpha^L(E_L)\d x^\alpha, \\
i_{X_\beta}\d x^\alpha = \delta^\alpha_\beta,
\end{cases}
\end{equation}
which are equivalent to \eqref{eq-E-L-k-cosymplectic}.

If $\mathcal{X}$ is an integrable {\sc sopde} which is a solution to (\ref{eq-E-L-k-cosymplectic-2}),
then its integral sections are solutions to the Euler-Lagrange equations for $L$
$$
\displaystyle\sum_{\alpha=1}^k \left(\displaystyle\frac{\partial^2L}{\partial x^\alpha \partial v^i_\alpha} \, + \,\displaystyle\frac{\partial^2L}{\partial q^j \partial v^i_\alpha} \,\displaystyle\frac{\partial \psi^j}{\partial x^\alpha} \, +
\,\displaystyle\frac{\partial^2L}{\partial v^j_\beta \partial v^i_\alpha} \,
\displaystyle\frac{\partial^2 \psi^j}{\partial x^\alpha \partial x^\beta}\,\right) = \,
\displaystyle\frac{\partial L}{ \partial q^i} \ .
$$

We saw in the Hamiltonian framework that the Reeb vector fields appear in the equations but do not appear in the solutions.
In the Lagrangian framework one can go a step further
and write the system of equations \eqref{eq-E-L-k-cosymplectic-2}
without the Reeb vector fields
\cite{BBLSV-2015}.
Consider the following Poincaré--Cartan 1-forms:
\begin{equation*}
    \Theta^\alpha_L =
    \theta^\alpha_L + \big( \delta^\alpha_\beta L - \Delta^\alpha_\beta(L) \big) \,\d t^\beta \ ,
\end{equation*}
where $\displaystyle
\Delta^\alpha_\beta =v^i_\beta\frac{\partial}{\,\partial v_\alpha^i}$.
Defining $\Omega^\alpha_L = -\d \Theta^\alpha_L$,
the system of equations \eqref{eq-E-L-k-cosymplectic-2} can be written as
\begin{equation*}
\begin{cases}
i_{X_\alpha}\Omega^\alpha_L = (k-1) \,\d L\,, \\
i_{X_\beta}\d x^\alpha = \delta^\alpha_\beta\,,
\end{cases}
\end{equation*}
which is equivalent to~\eqref{eq-E-L-k-cosymplectic}.
To our knowledge, this rewriting without Reeb vector fields cannot be done in the Hamiltonian framework.

%%%%%%%%%%%%%%%%%%%%%%%%%%%%%%%%%%%%%%%%%%%%%%%%%%%%%%%%%%%%%%%%

\subsection{The Legendre map}
\label{Legmap}

Given a Lagrangian $L\colon\R^k \times T^1_kQ \to \R$ the  Legendre map
$\mathcal{F}L\colon \R^k\times T^1_kQ \longrightarrow \R^k\times (T^1_k)^{\;*}Q$ is
defined as follows:
$$
\mathcal{F}L(t,{v_1}_q,\ldots , {v_k}_q)=(t,\ldots, [\mathcal{F}L(t,{v_1}_q,\ldots ,
{v_k}_q)]^\alpha,\ldots )\ ,
$$ where
$$ [\mathcal{F}L(t,{v_1}_q,\ldots , {v_k}_q)]_q^\alpha(u_q)=
\displaystyle\frac{d}{ds}\Big\vert_{s=0}\displaystyle L\left(
t,{v_1}_q, \dots ,{v_\alpha}_q+su_q, \ldots , {v_k}_q \right)
\quad ; \quad u_q\in T_qQ \ .
$$
It is locally given by
$$
\mathcal{F}L:(t^\alpha,q^i,v^i_\alpha)  \longrightarrow  \left(t^\alpha,q^i,
\frac{\displaystyle\partial L}{\displaystyle\partial v^i_\alpha }
\right)\, .
$$
Let us observe that the Lagrangian forms can also be defined as
$\theta_L^\alpha=\mathcal{F}L^{\;*}\theta^\alpha$ and
$\omega_L^\alpha=\mathcal{F}L^{\;*}\omega^\alpha$.

The Lagrangian function $L$ is regular if, and only if, the corresponding
Legendre map $\mathcal{F}L$ is a local diffeomorphism.
In the particular case that $\mathcal{F}L$ is a global diffeomorphism, $L$ is said to be a
\textbf{hyperregular Lagrangian}.

A singular Lagrangian $L$ is called \textbf{almost-regular} if
$\mathcal{P}:= \mathcal{F}L(T^1_kQ)$ is
a closed submanifold of $\R^k\times(T^1_k)^{\;*}Q$,
the Legendre map $\mathcal{F}L$ is a submersion onto its image, and
the fibres $\mathcal{F}L^{-1}(\mathcal{F}L(v))$,
for every $v\in \R^k\times T^1_kQ$, are
connected submanifolds of $\R^k\times T^1_kQ$.
In this last case, there exists a Hamiltonian formalism associated with the original Lagrangian system,
which is developed on the submanifold $\mathcal{P}$.

{\bf Remark:} {\rm
If $L$ is regular,
$(dt^\alpha,\omega_L^\alpha,V)$
is a $k$-cosymplectic structure on 
${\R^k\times T^1_kQ}$,
where
$V = \ker \,
T \pi_{\R^k\times Q} =
\left\langle
\frac{\displaystyle\partial}{\displaystyle\partial v^i_\alpha}
%, \dots,\frac{\displaystyle\partial}{\displaystyle\partial v^i_k}
\right\rangle$
is the vertical distribution of
the vector bundle
$\pi_{\R^k\times Q}\colon\R^k\times T^1_kQ \to \R^k\times Q$.
If $L$ is almost-regular,
then $\mathcal{P}$ is a $k$-precosymplectic manifold
with the  $k$-precosymplectic structure inherited from the above one.}

%%%%%%%%%%%%%%%%%%%%%%%%%%%%%%%%%%%%%%%%%%%%%%%%%%%%%%%%%%%%%%%%
\section{Constraint algorithm for \texorpdfstring{$k$}--precosymplectic field theories}
%%%%%%%%%%%%%%%%%%%%%%%%%%%%%%%%%%%%%%%%%%%%%%%%%%%%%%%%%%%%%%%%

In this section we generalize the algorithm developed in
\cite{gra09} for $k$-presymplectic field theories to the
case of $k$-precosymplectic field theories.

We are going to consider $k$-precosymplectic manifolds $M=\R^k\times P$ where $P$ is a $k$-presymplectic manifold. These manifolds have Darboux coordinates and
we also have uniquely determined a collection of Reeb vector fiels $\Reeb_1,\dotsc,\Reeb_k$.
These cases are the most usual appearing in the applications
of the $k$-symplectic and $k$-cosymplectic formulation in classical field theories and applied mathematics.

In the same way as for $k$-cosymplectic field theories, we define:
			
\begin{definition}
A {\bf $k$-precosymplectic Hamiltonian system} is given by a family
$(M,\omega^\alpha,$ $\eta^\alpha,V,\gamma)$,
where $(M,\omega^\alpha,\eta^\alpha,V)$ is a $k$-precosymplectic manifold where $M = \R^k\times P$ and $P$ is a $k$-presymplectic manifold and $\gamma\in\Omega^1(M)$ is a closed $1$-form called the \textbf{Hamiltonian $1$-form}. Since $\gamma$ is closed, by Poincaré's Lemma, $\gamma=\d h$ for some $h\in C^\infty(U)$, $U\subset M$, which is called a {\bf local Hamiltonian function}.

A $k$-vector field $\calX = (X_1,\dotsc,X_k)\in\X^k(M)$ is said to be a \textbf{$k$-precosymplectic Hamiltonian $k$-vector field} if it is a solution to the system of equations
\begin{equation*}
	\begin{cases}
	i_{X_\alpha}\omega^\alpha = \gamma - \gamma(\Reeb_\alpha)\eta^\alpha,\\
	i_{X_\alpha}\eta^\beta = \delta^\beta_\alpha,
	\end{cases}
\end{equation*}
\end{definition}
			
The solutions to the field equations defined by the $k$-presymplectic Hamiltonian system
$(M,\omega^\alpha,\eta^\alpha,V,\gamma)$
are the integral sections of these
$k$-precosymplectic Hamiltonian $k$-vector fields.

{\bf Remark:} 
{\rm
Notice that in the case $k=1$, we recover the case of singular non-autonomous mechanics studied in
\cite{chi94}.
In that case, the Poincar\'e--Cartan $2$-form is widely used in the development of the constraint algorithm.}

We want an algorithm that allows us to find a submanifold $N\hookrightarrow M$ where the system of equations \eqref{eq-Hamilton-k-cosymplectic} has solutions tangent to $N$.
In order to find this submanifold $N$ (if it exists!) we develop an algorithm which introduces some constraints in every step that provides us a sequence of submanifolds
\begin{equation*}
\dotsb \hookrightarrow M_j \hookrightarrow \dotsb \hookrightarrow M_2 \hookrightarrow M_1 \hookrightarrow M
\end{equation*}
which in favorable cases will end in the {\bf final constraint submanifold} $N$. Notice that this manifold may be a union of isolated points ($\dim N = 0$) or be empty. These cases have no interest for us, we are only interested in cases where we have a final constraint submanifold of dimension greater than 0.
			
\begin{theorem}   
\label{thm-k-co-algorithm}
Consider a $k$-precosymplectic Hamiltonian system $(M,\omega^\alpha,\eta^\alpha,V,\gamma)$, a submanifold $C\hookrightarrow M$
and a $k$-vector field $\calX\colon C\to (T^1_k)_CM$
such that $\calX_p\in (T_k^1)_pC$ for every $p\in C$.
Under this setting, the following two conditions are equivalent:
\begin{enumerate}[(1)]
\item
There exists a $k$-vector field $\calX = (X_\alpha)\colon C\to (T^1_k)_CM$ tangent to $C$ such that the system of equations
\begin{equation}
\label{eq-thm-k-co}
	\begin{cases}
	i_{X_\alpha}\omega^\alpha = \gamma - \gamma(\Reeb_\alpha) \eta^\alpha,\\
	i_{X_\alpha}\eta^\beta = \delta^\beta_\alpha,
\end{cases}
\end{equation}
holds on $C$.
\item
For every $p\in C$, there exists
$\calZ_p = (Z_\alpha)_p\in (T^1_k)_pC$
such that, if
$\widetilde\gamma_p = \gamma_p({\Reeb_\alpha}_p)\eta^\alpha_p$,
then
$$
i_{{Z_\alpha}_p}\eta^\beta_p =
\delta_\alpha^\beta \quad , \quad \sum_\alpha\eta^\alpha_p + \widetilde\gamma_p =
\flat(\calZ_p) \ .
$$
\end{enumerate}
\end{theorem}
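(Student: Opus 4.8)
The implication $(1)\Rightarrow(2)$ I would dispose of first, as it is essentially a restriction argument. Given a $k$-vector field $\calX=(X_\alpha)$ along $C$, tangent to $C$, that satisfies \eqref{eq-thm-k-co} at every point of $C$, fix $p\in C$ and put $\calZ_p:=\calX_p$. By tangency $\calZ_p\in(T^1_k)_pC$, and the second equation of \eqref{eq-thm-k-co} evaluated at $p$ reads $i_{(Z_\alpha)_p}\eta^\beta_p=\delta^\beta_\alpha$. For the remaining relation I would take the first equation of \eqref{eq-thm-k-co} at $p$, add $\sum_\alpha(i_{(Z_\alpha)_p}\eta^\alpha_p)\,\eta^\alpha_p=\sum_\alpha\eta^\alpha_p$ to both sides, and recognise the left-hand side as $\flat(\calZ_p)$ --- exactly the computation that turns \eqref{eq-Hamilton-k-cosymplectic} into \eqref{eq-Hamilton-k-cosymplectic-2}. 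Hence (2) holds with this $\calZ_p$.

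The substance lies in $(2)\Rightarrow(1)$, which consists in gluing the pointwise data furnished by (2) into a single smooth tangent $k$-vector field. I would set things up with two remarks. First, for a $k$-vector field $\calY=(Y_\alpha)$ along $C$ the quantities $i_{Y_\beta}\eta^\alpha$ and $\flat(\calY)=i_{Y_\alpha}\omega^\alpha+(i_{Y_\alpha}\eta^\alpha)\eta^\alpha$ are linear in $\calY$, whereas the right-hand sides prescribed in (2) depend only on the base point; therefore, at each $p\in C$, the solution set $S_p\subseteq(T^1_k)_pC$ of the two relations of (2) is an affine subspace of the vector space $(T^1_k)_pC$, and it is nonempty by hypothesis. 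Second, running the $\flat$-manipulation of the previous paragraph backwards shows that on $S_p$ the two relations of (2) are equivalent to the equations \eqref{eq-thm-k-co} at $p$. Consequently it is enough to exhibit one smooth section $p\mapsto\calX_p$ of the vector bundle $(T^1_k)_CC\to C$ with $\calX_p\in S_p$ for every $p$.

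Such a section I would construct locally and then patch. Near each point of $C$ I would write the two relations of (2) in a chart of $C$ and suitable local frames; this becomes an inhomogeneous linear system with coefficients depending smoothly on the base point, solvable at each point of the chart by hypothesis (2). Choosing the frames so that the relevant coefficient matrices have locally constant rank --- which is possible because $M=\R^k\times P$ admits Darboux charts in which $\eta^\alpha$, $\omega^\alpha$, and hence $\flat$, have constant rank, and which is anyway the situation in the applications where $C$ is a constraint submanifold produced by the algorithm --- elementary linear algebra yields a smooth local solution $\calX^{(\mu)}$ on an open set $U_\mu$. Taking a locally finite cover $\{U_\mu\}$ of $C$ with a subordinate partition of unity $\{\rho_\mu\}$, I would set $\calX:=\sum_\mu\rho_\mu\,\calX^{(\mu)}$. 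This $\calX$ is smooth; it is tangent to $C$ because each $(T^1_k)_pC$ is a linear subspace of $(T^1_k)_pM$; and at each $p$ the value $\calX_p$ is a convex combination of the points $\calX^{(\mu)}_p\in S_p$, hence belongs to the affine (so convex) set $S_p$. By the second remark, $\calX$ then satisfies \eqref{eq-thm-k-co} on $C$, which is (1).

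The step I expect to be the real obstacle is the existence of smooth \emph{local} solutions of the pointwise-solvable linear system: pointwise solvability of a system with smooth coefficients does not by itself yield a smooth solution once the rank drops, so the argument must genuinely use the constant-rank structure coming from the Darboux presentation (equivalently, one may invoke the fact that a surjective morphism of vector bundles admits smooth local right inverses). Everything else --- the $\flat$ bookkeeping, the convexity of affine subspaces, and the partition of unity --- is routine, and follows the pattern of the $k$-presymplectic argument in \cite{gra09}.
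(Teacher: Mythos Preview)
Your argument for $(1)\Rightarrow(2)$ is the same as the paper's: set $\calZ_p=\calX_p$ and unwind the definition of~$\flat$.

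For $(2)\Rightarrow(1)$ the routes diverge. The paper does not perform any globalisation at all: it fixes $p\in C$, takes a Darboux chart, writes $\calZ$ and $\widetilde\gamma$ in coordinates, computes $\flat(\calZ)$ explicitly, and compares term by term. The outcome is that the coordinate conditions forced by $(2)$ at~$p$ (namely $A^\beta_\alpha=\delta^\beta_\alpha$, $\partial h/\partial z^j=0$, $\partial h/\partial y^i=-\sum C^\alpha_{\alpha,i}$, $\partial h/\partial y^\alpha_{i_\alpha}=B^{i_\alpha}_\alpha$) are literally the coordinate form \eqref{eq-k-co-coordinates} of the system \eqref{eq-thm-k-co}; the proof ends there, with the equivalence understood pointwise. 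Your approach replaces this explicit coordinate comparison by the one-line observation that, once $i_{Z_\alpha}\eta^\beta=\delta^\beta_\alpha$ holds, the identity $\flat(\calZ)=i_{Z_\alpha}\omega^\alpha+\sum_\alpha\eta^\alpha$ makes the two conditions manifestly equivalent, and then you add a genuine extra layer: you build a \emph{smooth} tangent $k$-vector field by solving locally under a constant-rank hypothesis and patching with a partition of unity over the affine (hence convex) fibres $S_p$. That extra layer is sound as stated, with the caveat you already flagged --- the constant rank of the restricted system on $(T^1_k)C$ is not automatic from Darboux charts on~$M$ alone, since it also depends on how $TC$ meets $\ker\flat$; the paper simply sidesteps this by treating the theorem as a pointwise statement, which is exactly how it is used immediately afterwards to define the constraint submanifolds~$M_j$. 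One thing the paper's coordinate computation buys that your abstract argument does not display is the compatibility condition $\partial h/\partial z^j=0$ (equivalently $\gamma(\partial/\partial z^j)=0$), which is singled out right after the theorem as the zeroth step of the algorithm.
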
   
\begin{proof}
Consider the $k$-vector $\calZ_p = \calX_p\in (T^1_k)_pC$.
It is clear that $i_{{Z_\alpha}_p}\eta^\beta_p = \delta^\beta_\alpha$ for every $p\in C$ and that
\begin{equation*}
	\flat(\calZ_p) = i_{{Z_\alpha}_p}\omega^\alpha_p + (i_{{Z_\alpha}_p}\eta^\alpha_p)\eta^\alpha_p = \widetilde\gamma_p + \sum_\alpha\eta^\alpha_p.
\end{equation*}
				
Conversely, let us suppose that for every $p\in C$, there exists $\calZ_p\in (T^1_k)_pC$ such that
$i_{{Z_\alpha}_p}\eta^\beta_p =
\delta^\beta_\alpha$ and $\flat(\calZ_p) =
\widetilde\gamma_p + \sum_\alpha\eta^\alpha_p$.
Let $p\in C$.
We consider a Darboux chart
$(\U, \{x^\alpha,y^i,y^\alpha_{i_\alpha};z^j\})$
around $p$ and hence,
\begin{gather*}
\eta^\alpha = \d x^\alpha,
\\
\omega^\alpha = \sum_{i\in I_\alpha}\d y^i\wedge\d y_i^\alpha,
\\
\gamma =
\frac{\partial h}{\partial y^i}\d y^i +
\frac{\partial h}{\partial y_{i_\alpha}^\alpha}\d y_{i_\alpha}^\alpha +
\frac{\partial h}{\partial x^\alpha}\d x^\alpha +
\frac{\partial h}{\partial z^j}\d z^j.
\end{gather*}
In these Darboux coordinates,
$\widetilde\gamma = \gamma - \gamma(\Reeb_\alpha)\eta^\alpha$ is
\begin{equation*}
	\widetilde\gamma =
	\frac{\partial h}{\partial y^i}\d q^i + \frac{\partial h}{\partial y_{i_\alpha}^\alpha}\d y_{i_\alpha}^\alpha + \frac{\partial h}{\partial z^j}\d z^j.
\end{equation*}
From now on, we will omit the point $p$ everywhere in order to simplify the notation.
We write our $k$-vector $\calZ$ in coordinates:
\begin{equation*}
Z_\alpha =
A^\beta_\alpha\frac{\partial}{\partial x^\beta} +
B^i_\alpha\frac{\partial}{\partial y^i} +
C^\beta_{\alpha,i_\beta}\frac{\partial}{\partial y_{i_\beta}^\beta} +
D_\alpha^j\frac{\partial}{\partial z^j}.
\end{equation*}
Now let us compute its image by the morphism $\flat$:
\begin{align*}
\flat(\calZ) &= \sum_\alpha i_{Z_\alpha}\omega^\alpha + (i_{Z_\alpha}\eta^\alpha)\eta^\alpha \\
&= \sum_\alpha\sum_{i\in I_\alpha} i_{Z_\alpha}(\d y^i\wedge\d y_i^\alpha) + \sum_\alpha (i_{Z_\alpha}\d x^\alpha)\d x^\alpha \\
&= \sum_\alpha\sum_{i\in I_\alpha} (i_{Z_\alpha}\d y^i)\,\d y_i^\alpha - \sum_\alpha\sum_{i\in I_\alpha} \d y^i\,(i_{Z_\alpha}\d y_\alpha^i) + \sum_\alpha (i_{Z_\alpha}\d x^\alpha)\d x^\alpha \\
&= \sum_\alpha\sum_{i\in I_\alpha} B_\alpha^i\d y^\alpha_i - \sum_\alpha\sum_{i\in I_\alpha}C^\alpha_i\d y^i + \sum_\alpha A^\alpha_\alpha\d x^\alpha.
\end{align*}
Comparing this expression with
\begin{equation*}
	\sum_\alpha\eta^\alpha + \widetilde\gamma = \sum_\alpha\d x^\alpha + \frac{\partial h}{\partial y^i}\d y^i + \frac{\partial h}{\partial y_{i_\alpha}^\alpha}\d y_{i_\alpha}^\alpha + \frac{\partial h}{\partial z^j}\d z^j,
\end{equation*}
we get the following conditions on $Z$:
\begin{equation*}
	A^\alpha_\alpha = 1,\qquad \frac{\partial h}{\partial z^j} = 0,\qquad \frac{\partial h}{\partial y^i} = -\sum_{\substack{\alpha\mbox{ \scriptsize such} \\ \mbox{\scriptsize that } i\in I_\alpha}}C^\alpha_{\alpha,i},\qquad \frac{\partial h}{\partial y_{i_\alpha}^\alpha} = B^{i_\alpha}_\alpha.
\end{equation*}
Furthermore, we know by hypothesis that $A^\beta_\alpha = \delta^\beta_\alpha$.
The second condition $\displaystyle \frac{\partial h}{\partial z^j} = 0$ is a compatibility condition of the Hamilton equations in the $k$-precosymplectic case.
It can be stated in the following way: the Hamiltonian function cannot depend on the so-called {\sl gauge variables} $z^j$. The third and fourth equations, along with the condition $A_\alpha^\beta = \delta_\alpha^\beta$, are equivalent to the system of equations \eqref{eq-thm-k-co} when written in coordinates (see equation \eqref{eq-k-co-coordinates}). This concludes the proof.
\end{proof}

Using the previous theorem we can give a description of the constraint algorithm.
First of all, we must restrict ourselves to the points such that
$\displaystyle\gamma\Big(\frac{\partial}{\,\partial z^j}\Big)=0$, $\forall j$,
because it is a compatibility condition of the system.
The $j$-ary constraint submanifold $M_j\subset M_{j-1}$ is defined as
{\small 
\begin{equation*}
	M_j = \Big\{p\in M_{j-1}\,\vert\,\exists \calZ =
	(Z_\alpha)\in (T^1_k)M_{j-1}
	\mbox{ such that } \flat(\calZ) = \widetilde\gamma + \sum_\alpha\eta^\alpha
	\mbox{ and }
	i_{Z_\alpha}\eta^\beta = \delta^\beta_\alpha\Big\},
\end{equation*}}%
where $M_0 = M$.
			
\begin{definition}
Let $C\hookrightarrow M$ be a submanifold of a $k$-precosymplectic manifold~$M$.
The \textbf{$k$-precosymplectic orthogonal complement} of $C$ is
the annihilator
\begin{equation*}
	TC^\perp = \left(\flat\left((T^1_k)C\cap D_C\right)\right)^0
\end{equation*}
where $D_C$ is the set of all $k$-vectors
$\calZ_p=(Z_\alpha)_p$ on $C$
such that
$i_{{Z_\alpha}_p}\eta_p^\beta = \delta^\beta_\alpha$.
\end{definition}

With this definition and Theorem \ref{thm-k-co-algorithm}
we can give an alternative characterization of the constraints submanifolds:
\begin{equation*}
M_j = \Big\{p\in M_{j-1}\,\vert\,\widetilde\gamma_p+\sum_\alpha\eta_p^\alpha\in ((TM_{j-1})_p^\perp)^0\Big\} .
\end{equation*}
Although this allows us to effectively compute the constraints at every step of the algorithm,
an alternative and equivalent way to compute the constraint submanifolds given by the
\textbf{$k$-precosymplectic constraint algorithm},
which is much more operational, is the following:
\begin{enumerate}[(i)]
\item
Obtain a local basis $\{Z_1,\dotsc,Z_r\}$ of $(TM)^\perp$.
\item
Use Theorem \ref{thm-k-co-algorithm} to obtain a set of independent constraint functions
\begin{equation}
f_\mu = i_{Z_\mu}(\widetilde\gamma + \sum_\alpha\eta^\alpha) \ ,
\label{constra}
\end{equation}
which define the submanifold $M_1\hookrightarrow M$.
\item
Compute solutions $\calX = (X_\alpha)$ to equations \eqref{eq-Hamilton-k-cosymplectic}.
\item
Impose the tangency condition of $X_1,\dotsc,X_k$ on $M_1$.
\item
Iterate item (iv) until no new constraints appear.
\end{enumerate}
		
If this iterative procedure ends in a submanifold $M_l$ with nonzero dimension, then we can ensure the existence of global solutions to equations \eqref{eq-Hamilton-k-cosymplectic} on this submanifold~$M_l$.

{\bf Remark:} 
\rm
The constraint algorithm works, in particular,
for a singular Lagrangian field theory
$(\R^k\times T^1_kQ,\omega^\alpha_L,\d x^\alpha,L)$,
and for its associated Hamiltonian formalism on $\mathcal{P}$.
Nevertheless, in the case of the Lagrangian formalism,
the problem of finding {\sc sopde} multivector fields
which are solutions to the field equations is not considered in this algorithm.
These {\sc sopde} multivector fields can be obtained, in some cases,
by fixing some arbitrary functions in the general solution to the field equations
on the final constraint submanifold $M_f$.
However, in general,
looking for {\sc sopde} multivector fields solution
leads to new constraints
which define a new submanifold $M_{f^\prime} \hookrightarrow M_f$;
hence, the tangency condition may originate more constraints and,
in the best of cases,
we obtain a new final constraint submanifold
$S_{f^\prime} \hookrightarrow M_{f^\prime}$
where there are {\sc sopde} multivector fields solutions
tangent to $S_{f^\prime}$.
In the examples analyzed in Section \ref{afflag}
we give some insights on how to proceed in these cases
(see, for instance,
\cite{dLe02}
for a deep study of these topics in singular mechanics).
Nevertheless a rigorous intrinsic characterization
of all of these additional ``{\sc sopde} constraints'' in field theories
is still an open problem.

Finally, notice that we can treat $k$-presymplectic field theories as a particular case of $k$-precosymplectic field theories.
In this case, we do not have the $1$-forms $\eta^\alpha$ and we recover the $k$-presymplectic algorithm described in
\cite{gra09}.

%%%%%%%%%%%%%%%%%%%%%%%%%%%%%%%%%%%%%%%%%%%%%
\section{Examples}
%%%%%%%%%%%%%%%%%%%%%%%%%%%%%%%%%%%%%%%%%%%%%

%%%%%%%%%%%%%%%%%%%%%%%%%%%%%%%%%%%%%%%%%%%%%%%%%%%%%%%%%%%%%%%%
\subsection{Affine Lagrangians}
\label{afflag}
%%%%%%%%%%%%%%%%%%%%%%%%%%%%%%%%%%%%%%%%%%%%%%%%%%%%%%%%%%%%

In classical field theory affine Lagrangians are used to describe
some relevant models in Physics such as, for instance,
the so-called Einstein--Palatini (or \textit{metric-affine}) approach
to gravitation,
and Dirac fermion fields
\cite{GMS-1997},
among others.

Let $Q$  be the configuration manifold of a field theory.
The bundle 
$\bar\tau_1\colon\R^k\times T^1_kQ\to\R^k$
represents its non-autonomous phase space
of $k$-velocities,
and has coordinates $(x^\alpha, q^i, v^i_\alpha)$. 
We consider an affine Lagrangian
$L\colon\R^k\times T^1_kQ\to\R$ of the form
\begin{equation}
\label{eq-afflag}
    L(x^\alpha, q^i, v^i_\alpha) = 
    f^\mu_j(q^i) v^j_{\mu}+g(x^\alpha, q^i)  \; .
\end{equation}
Such a function is the sum of
the pullbacks to $\R^k\times T^1_kQ$
of two functions:

- a linear function 
$T^1_kQ \to \R$ on the fibers of the bundle
$T^1_kQ\to Q$;

- an arbitrary function 
$\R^k \times Q \to \R$.

\subsubsection*{Lagrangian formalism}

Associated with this affine Lagrangian 
we have
\beann
E_L&=&\Delta(L)-L=-g(x^\alpha, q^i)\in C^\infty(\R^k\times T^1_kQ) \ , \\
\omega_L^\alpha&=&
-\derpar{f^\alpha_k}{q^j}\,\d q^j\wedge\d q^k
\in{\mit\Omega}^2(\R^k\times T^1_kQ) \ ,
\eeann
which define a $k$-precosymplectic structure $(\omega_L^\alpha, \diff x^\alpha, \mathcal{V})$ in $\R^k\times T^1_kQ$, where
${\cal V}=\left\langle\displaystyle\derpar{}{v^i_\mu}\right\rangle$
and we can take $\displaystyle\Reeb_\alpha= \frac{\partial}{\partial x^\alpha}$
as the Reeb vector fields.
% Then
% $$
% \d E_L+\derpar{L}{x^\mu}\,\d x^\mu=
% -\derpar{g}{x^\mu}\,\d x^\mu-\derpar{g}{q^j}\,\d q^j+
% \derpar{g}{x^\mu}\,\d x^\mu=
% -\derpar{g}{q^j}\,\d q^j\ .
% $$
For a $k$-vector field $\calX = (X_1,\dotsc,X_k)\in\X^k(\R^k\times T^1_kQ)$
with
\beq
X_\alpha=
\derpar{}{x^\alpha}+F^l_\alpha\,\derpar{}{q^l}+G^l_{\alpha\nu}\,\derpar{}{v^l_\nu}
\in{\mathfrak X}(\R^k\times T^1_kQ) \ ,
\label{semhol}
\eeq
the Euler--Lagrange equation \
$\displaystyle i_{X_\alpha}\omega^\alpha_L=\d E_L+\derpar{L}{x^\mu}\,\d x^\mu$ \
gives
\begin{equation}
F^l_\alpha\left(\derpar{f^\alpha_l}{q^j}-\derpar{f^\alpha_j}{q^l} \right)\d q^j=-\derpar{g}{q^j}\,\d q^j
\quad \Longleftrightarrow \quad
\derpar{g}{q^j}+F^l_\alpha\left(\derpar{f^\alpha_l}{q^j}-\derpar{f^\alpha_j}{q^l}\right)=0\ .
\label{2eq}
\end{equation}
% and the first group of Euler--Lagrange equations leads to
% \begin{equation}
% \derpar{g}{q^j}=
% -F^l_\alpha\left(\derpar{f^\alpha_l}{q^j}-\derpar{f^\alpha_j}{q^l} \right) \ .
% \label{2eq}
% \end{equation}
This is a system of (linear) equations for the component functions $F^l_\alpha$,
which allows us to determine (partially) these functions and, eventually, gives raise to
constraints functions (depending on the rank of the matrices involved).
If this last situation happens, then the constraint algorithm follows
by demanding the tangency condition for the vector fields $X_\alpha$.
Observe also that, in any case, in these vector fields, the coefficients 
$G^i_{\alpha\nu}$ are undetermined.

If we look for semi-holonomic $k$-vector fields ${\cal X}$,
it implies that $F^k_\nu=v^k_\nu$ in \eqref{semhol}.
Then, equations \eqref{2eq} read
$$
\derpar{g}{q^j}+v^l_\alpha\left(\derpar{f^\alpha_l}{q^j}-\derpar{f^\alpha_j}{q^l} \right)=0
$$
which are constraints.
Then the tangency condition for the vector fields 
$$
X_\nu=\derpar{}{x^\nu}+v^l_\nu\,\derpar{}{q^l}+G^l_{\nu\alpha}\,\derpar{}{v^l_\alpha} \ ,
$$
leads to
$$
\derpar{g}{q^j}+G^l_{\nu\alpha}\left(\derpar{f^\alpha_l}{q^j}-\derpar{f^\alpha_j}{q^l} \right)=0
$$
which allows us to determine (partially) the functions $G^k_{\nu\alpha}$ and, eventually, gives raise to
constraints functions, depending on the rank of the matrix 
$\displaystyle\left(\derpar{f^\alpha_l}{q^j}-\derpar{f^\alpha_j}{q^l} \right)$.
In this last case, the constraint algorithm continues
by demanding again the tangency condition. 

\subsubsection*{Hamiltonian formalism}

The bundle $\bar\pi_1\colon\R^k\times (T^1_k)^\ast Q\to\R^k$ is the 
phase space of $k$-momenta. The Legendre map 
${\cal F}L\colon\R^k\times T^1_k Q\to\R^k\times (T^1_k)^\ast Q$
associated to the Lagrangian $L$ is
$$
x^\mu\circ{\cal F}L=x^\mu\quad ,\quad
q^i\circ{\cal F}L=q^i\quad ,\quad
p_i^\mu\circ{\cal F}L=\derpar{L}{v^i_\mu}=f^\mu_i(q^j)\ .
$$
Observe that ${\cal P}={\cal F}L(\R^k\times T^1_k Q)$ is defined 
by the constraints $p_i^\mu=f^\mu_i(q^j)$; hence
it is the image of a section $\xi\colon\R^k\times Q\to\R^k\times (T^1_k)^\ast Q$
of the projection $(\pi_Q)_{(1,0)}\colon\R^k\times (T^1_k)^*Q\to\R^k\times Q$, 
and then it can be identified in a natural way with $\R^k\times Q$.
Therefore, as $\xi\circ\tau_1$ is a surjective submersion with connected fibres, then so is
${\cal F}L_0\colon \R^k\times T^1_k Q\to{\cal P}$ 
(the restriction of ${\cal F}L$ onto its image ${\cal P}$),
since ${\cal F}L_0=\xi\circ\tau_1$. 
In conclusion, affine Lagrangians are almost regular Lagrangians
and thus ${\cal P}$ is an embedded submanifold of $\R^k\times (T^1_k)^\ast Q$, 
which is diffeomorphic to $\R^k\times Q$.

Therefore we can introduce
\beann
h&=&-g(x^\alpha, q^i)\in C^\infty({\cal P}) \ , \\
\omega^\alpha&=&
-\derpar{f^\alpha_k}{q^j}\,\d q^j\wedge\d q^k
\in{\mit\Omega}^2({\cal P}) \ ,
\eeann
such that ${\cal F}L_0^*E_L=h$ and 
${\cal F}L_0^*\omega^\alpha_L=\omega^\alpha$.
As above, $\eta^\alpha=\d x^\alpha$ and 
the Reeb vector fields are $\Reeb_\alpha=\displaystyle \frac{\partial}{\partial x^\alpha}$.
% Then
% $$
% \d h-\Reeb_\alpha(\d h)\,\d x^\mu=
% \d h-\derpar{h}{x^\mu}\,\d x^\mu=
% -\derpar{g}{x^\mu}\,\d x^\mu-\derpar{g}{q^j}\,\d q^j+
% \derpar{g}{x^\mu}\,\d x^\mu=-\derpar{g}{q^j}\,\d q^j
% \in{\mit\Omega}^1({\cal P}) \ ,
% $$
Then, for a $k$-vector field $\calX = (X_1,\dotsc,X_k)\in\X^k({\cal P})$,
with
$$
X_\alpha=\derpar{}{x^\alpha}+F^l_\alpha\,\derpar{}{q^l}\in{\mathfrak X}({\cal P}) \ ,
$$
the Hamilton equation \
$\displaystyle i_{X_\alpha}\omega^\alpha=\d h-\Reeb_\alpha(h)\,\d x^\mu$ \
leads to
\beq
-F_\alpha^l\left(\frac{\partial f_j^\alpha}{\partial q^l} - \frac{\partial f_l^\alpha}{\partial q^j}\right)\d q^j=
-\derpar{g}{q^j}\,\d q^j
\quad \Longleftrightarrow \quad
\derpar{g}{q^j}+
F^l_\alpha\left(\derpar{f^\alpha_l}{q^j}-\derpar{f^\alpha_j}{q^l} \right)=0
\ ,
\label{22}
\eeq
% and the first group of Hamilton equations lead to
% \beq
% \derpar{g}{q^j}=
% -F^l_\alpha\left(\derpar{f^\alpha_l}{q^j}-\derpar{f^\alpha_j}{q^l} \right) \ .
% \label{22}
% \eeq
As in the Lagrangian formalism, this system of (linear) equations
allows us to determine (partially) the component functions $F^l_\alpha$
and, eventually, could give constraints functions (depending on the rank of the matrices involved).
If this last situation happens, then the constraint algorithm follows
by demanding the tangency condition for the vector fields $X_\alpha$.

%%%%%%%%%%%%%%%%%%%%%%%%%%%%%%%%%%%%%%%%%%%%%%%%%%%%%%%%%%%%%%%%
\subsection{A simple affine Lagrangian model}
\label{2ndex}
%%%%%%%%%%%%%%%%%%%%%%%%%%%%%%%%%%%%%%%%%%%%%%%%%%%%%%%%%%%%%%%

%As a particular example we consider a modification of the %model studied in \cite{dLe05}.

\subsubsection*{Lagrangian formalism}

The configuration manifold is $\R^2\times Q =\R^2\times\R^2$,
 with coordinates $(x^1,x^2;q^1,q^2)$.
The Lagrangian formalism takes place in
$\R^2 \times \oplus^2 T Q$,
with coordinates
$(x^1,x^2,q^1,q^2,v^1_1,v^1_2,v^2_1,v^2_2)$,
and the Lagrangian is
$$
L = q^2v_1^1 - q^1 v_2^2 + q^1q^2x^1 \,;
$$
that is,
the functions in equation~\eqref{eq-afflag} are
$$
f^1_1=q^2 \ , \quad f^2_1=0 \ , \quad
f^1_2=0\ , \quad f^2_2=-q^1\ , \quad
g=q^1q^2x^1 \,.
$$
We have the forms
$$
\eta^1 = \d x^1 \:,\quad 
\eta^2 = \d x^2 \:;\quad
\omega^1_L= \d q^1\wedge\d q^2 \:,\quad
\omega^2_L= \d q^1\wedge\d q^2 \:,
$$
and the Reeb vector fields \ 
$\displaystyle
\Reeb_1^L = \frac{\partial}{\partial x^1}\,, \
\Reeb_2^L = \frac{\partial}{\partial x^2}.$ \
The energy is simply
$$
E_L= - q^1 q^2x^1 \ ,
$$
% and then
% $$
% \d E_L-\Reeb_\alpha^L(E_L)\,\d x^\alpha=
% -q^2x^1\d q^1-q^1x^1\d q^2\ .
% $$
and, if $\calX = (X_1,X_2) \in \X^2(\R^2 \times \oplus^2 T Q)$
is a generic $2$-vector field 
with
\begin{gather*}
X_1 =  \frac{\partial}{\partial x^1} + 
  F^1_1\frac{\partial}{\partial q^1} + F^2_1\frac{\partial}{\partial q^2} + 
  G^1_{11}\frac{\partial}{\partial v_1^1}+G^1_{12}\frac{\partial}{\partial v_2^1}+
  G^2_{11}\frac{\partial}{\partial v_1^2}+G^2_{12}\frac{\partial}{\partial v_2^2} \ , 
\\
X_2 =\frac{\partial}{\partial x^2} + 
  F^1_2\frac{\partial}{\partial q^1} + F^2_2\frac{\partial}{\partial q^2} + 
  G^1_{21}\frac{\partial}{\partial v_1^1}+G^1_{22}\frac{\partial}{\partial v_2^1}+
  G^2_{21}\frac{\partial}{\partial v_1^2}+G^2_{22}\frac{\partial}{\partial v_2^2} \ ;
\end{gather*}
then the Lagrangian equation \
$i_{X_\alpha}\omega^\alpha_L=\d E_L-\Reeb_\alpha^L(E_L)\,\d x^\alpha$ \
is
$$
F_1^1\d q^2 - F_1^2\d q^1 + F_2^1\d q^2 - F_2^2\d q^1=
-q^2x^1\d q^1-q^1x^1\d q^2\ ,
$$
and conditions \eqref{2eq} read
%\beq
$$
F_1^2 + F_2^2 = q^2x^1 \:, \quad
F_1^1 + F_2^1 = -q^1x^1 \:,
$$
%\label{11}\eeq
which can also be written as
\begin{equation*}
    \begin{pmatrix}
        0 & 1 & 0 & 1\\
        -1 & 0 & -1 & 0
    \end{pmatrix}
    \begin{pmatrix}
        F_1^1\\
        F_1^2\\
        F_2^1\\
        F_2^2
    \end{pmatrix}
    = \begin{pmatrix}
        q^2x^1\\
        q^1x^1
    \end{pmatrix}\ .
\end{equation*}
Imposing the second order condition, $F_\mu^l = v_\mu^l$, we have that the 2-vector field $\mathcal{X} = (X_1,X_2)$ is
\begin{gather*}
    X_1 = \frac{\partial}{\partial x^1} + v_1^1\frac{\partial}{\partial q^1} + v_1^2\frac{\partial}{\partial q^2} + G_{1\nu}^l\frac{\partial}{\partial v_\nu^l}\\
    X_2 = \frac{\partial}{\partial x^2} + v_2^1\frac{\partial}{\partial q^1} + v_2^2\frac{\partial}{\partial q^2} + G_{2\nu}^l\frac{\partial}{\partial v_\nu^l}
\end{gather*}
and we get the two constraints
$$
\begin{cases}
    \zeta_1 = v_1^2 + v_2^2 - q^2x^1 = 0\\
    \zeta_2 = v_1^1 + v_2^1 + q^1x^1 = 0
\end{cases}
$$
The constraints
$\zeta_1$ and $\zeta_2$ 
define the submanifold 
${\cal S}_1\hookrightarrow\R^2 \times \oplus^2 T Q$.
Next, the tangency conditions on this submanifold lead to
\begin{equation*}
    \begin{cases}
        X_1(\zeta_1) = -q^2 + G_{11}^2 + G_{12}^2 - x^1v_1^2 = 0\\
        X_1(\zeta_2) = q^1 + x^1v_1^1 + G_{11}^1 + G_{12}^1 = 0\\
        X_2(\zeta_1) = -x^1v_2^2 + G_{21}^2 + G_{22}^2 = 0\\
        X_2(\zeta_2) = x^1v_2^1 + G_{21}^1 + G_{22}^1 = 0
    \end{cases}
\end{equation*}
which can be written in matrix form as
\begin{equation*}
    \begin{pmatrix}
        0 & 1 & 0 & 1 & 0 & 0 & 0 & 0\\
        1 & 0 & 1 & 0 & 0 & 0 & 0 & 0\\
        0 & 0 & 0 & 0 & 0 & 1 & 0 & 1\\
        0 & 0 & 0 & 0 & 1 & 0 & 1 & 0\\
    \end{pmatrix}
    \begin{pmatrix}
        G_{11}^1\\
        G_{11}^2\\
        G_{12}^1\\
        G_{12}^2\\
        G_{21}^1\\
        G_{21}^2\\
        G_{22}^1\\
        G_{22}^2\\
    \end{pmatrix}
    = \begin{pmatrix}
        q^2 + x^1v_1^2\\
        -q^1 - x^1v_1^1\\
        x^1v_2^2\\
        -x^1v_2^1
    \end{pmatrix}
\end{equation*}
which allow us to partially determine the coefficients $G_{\alpha\nu}^l$.
Notice that no new constraints appear. Thus, the final constraint submanifold is $\mathcal{S}_1$.

\subsubsection*{Hamiltonian formalism}

The Hamiltonian formalism takes place
in the bundle $\R^2 \times \oplus^2 T^* Q$,
which has coordinates
$(x^1,x^2,y^1,y^2,p^1_1,p^2_1,p^1_2,p^2_2)$.
The Legendre map
$\mathcal{F}L \colon \R^2 \times \oplus^2 T Q \to \R^2 \times \oplus^2 T^* Q$
is given by
$$
(x^1,x^2,y^1,y^2,p^1_1,p^2_1,p^1_2,p^2_2)=
\mathcal{F}L(x^1,x^2,q^1,q^2;v^1_1,v^1_2,v^2_1,v^2_2) =
(x^1,x^2,q^1,q^2;q^2,0,0,-q^1) \:.
$$
Its image is the submanifold ${\cal P}$
of $\R^2 \times \oplus^2 T^* Q$ given by the primary constraints
$$
p^1_1=q^2 \:, \quad 
p^2_1=0 \:, \quad 
p^1_2=0 \:, \quad 
p^2_2=-q^1 \:;
$$
so, we can describe ${\cal P}$ with coordinates
$(x^1,x^2,q^1,q^2)$.
In ${\cal P}$ we have the forms
$$
\eta^1 = \d x^1 \:,\quad 
\eta^2 = \d x^2 \:;\quad
\omega^1 = \d q^1\wedge\d q^2 \:,\quad
\omega^2 = \d q^1\wedge\d q^2 \:,
$$
%We have $\ker \eta^1 \cap \ker \eta^2 \cap \ker \omega^1 \cap \ker \omega^2 = \{0\}$.
and the Reeb vector fields \
$\displaystyle
\Reeb_1 = \frac{\partial}{\partial x^1} \,, \
\Reeb_2 = \frac{\partial}{\partial x^2}$. \
The Hamiltonian function is
$$
h = - q^1 q^2 x^1 \:,
$$
% and then
% $$
% \d h-\derpar{h}{x^\mu}\,\d x^\mu=-x^1q^2\d q^1 - x^1 q^1\d q^2\ .
% $$
Let  
$\calX = (X_1,X_2) \in \X^2({\cal P})$
be a generic $2$-vector field 
with
$$
X_1 = \frac{\partial}{\partial x^1} +
B^1_1\frac{\partial}{\partial q^1} + 
B^2_1\frac{\partial}{\partial q^2} \:,\quad
X_2 = \frac{\partial}{\partial x^2} + 
B^1_2\frac{\partial}{\partial q^1} +  
B^2_2\frac{\partial}{\partial q^2} \:;
$$
then the Hamiltonian equation \
$\displaystyle i_{X_\alpha}\omega^\alpha=\d h-\Reeb_\alpha(h)\,\d x^\mu$ \
gives
$$
B_1^1\d q^2 - B_1^2\d q^1 + B_2^1\d q^2 - B_2^2\d q^1=
-x^1q^2\d q^1 - x^1 q^1\d q^2\ .
$$
In this case, conditions \eqref{22} read
%\beq
$$
B_1^2 + B_2^2 = x^1 q^2 \:,\quad
B_1^2 + B_2^1 = -x^1 q^1 \:.
$$
%\label{33}\eeq
which allow us partially determine the coefficients $B_\alpha^j$. 
Notice that, in this case, no new constraints appear.

%%%%%%%%%%%%%%%%%%%%%%%%%%%%%%%%%%%%%%%%%%%%%
\subsection{A singular quadratic Lagrangian}
%%%%%%%%%%%%%%%%%%%%%%%%%%%%%%%%%%%%%%%%%%%%%

\subsubsection*{Lagrangian formalism}

As another example we consider the Lagrangian function
\begin{equation*}
L = \frac{1}{2e}q_t^2 + \frac{1}{2}\sigma^2 e - \frac{1}{2}\tau q_s^2 \,,
\end{equation*}
with two independent variables
$(t,s)\in\R^2$
and two dependent variables
$(q,e) \in Q= \R\times\R^+$;
the corresponding natural coordinates of
$\R^2\times\oplus^2TQ$ are written
$(t,s;q,e;q_t,q_s,e_t,e_s)$.
Also $\tau\in\R$ is a constant parameter and
$\sigma = \sigma(t,s) \in \mathcal{C}\infty(\R^2)$
is a given function.
This Lagrangian is very similar to the one introduced in
\cite{gra09}
but letting one of its parameters
to be a given function
in order to illustrate the non-autonomous setting.

First we need to compute several geometric objects:
$$
J^t =
\frac{\partial}{\,\partial q_t} \otimes \d q +
\frac{\partial}{\,\partial e_t} \otimes \d e
\,,\quad
J^s =
\frac{\partial}{\,\partial q_s} \otimes \d q +
\frac{\partial}{\,\partial e_s} \otimes \d e
\,,
$$
$$
\Delta_t =
q_t \frac{\partial}{\,\partial q_t} +
e_t \frac{\partial}{\,\partial e_t}
\,,\quad
\Delta_s =
q_s \frac{\partial}{\,\partial q_s} +
e_s \frac{\partial}{\,\partial e_s}
\,,\quad
\Delta = \Delta_t + \Delta_s
\,.
$$
Now we compute the Poincar\'e--Cartan forms:
$$
\theta_L^t = \d L \circ J^t =
\frac 1e \,q_t \,\d q
\,,\quad
\theta_L^s = \d L \circ J^s =
-\tau \,q_s \,\d q
\,,
$$
$$
\omega_L^t = -\d \theta_L^t =
\frac {q_t}{e^2} \,\d e \wedge \d q -
\frac 1e \,\d q_t \wedge \d q
\,,\quad
\omega_L^s = -\d \theta_L^s =
\tau \,\d q_s \wedge \d q
\,.
$$
We also need the Lagrangian energy:
$$
E_L = \Delta(L) - L =
\frac{1}{2e} q_t^2 - \frac12 \sigma^2 e - \frac12 \tau q_s^2
\,.
$$

Now we consider a generic 2-vector field
$\calX = (X_t,X_s)$ on the phase space
$\R^2 \times \oplus^2 T Q$
and consider the $k$-cosymplectic Euler--Lagrange equations
\eqref{eq-E-L-k-cosymplectic}
for it.
After applying the second group of equations,
$i_{X_\beta} \d x^\alpha =
\delta^\alpha_\beta$,
the form of $\calX$ is given by
\begin{gather*}
X_t =
\frac{\partial}{\partial t} +
B_t^q \frac{\partial}{\partial q} +
B_t^e \frac{\partial}{\partial e} +
C_t^{q_t} \frac{\partial}{\partial q_t} +
C_t^{q_s} \frac{\partial}{\partial q_s} +
C_t^{e_t} \frac{\partial}{\partial e_t} +
C_t^{e_s} \frac{\partial}{\partial e_s}
\,,
\\
X_s =
\frac{\partial}{\partial s} +
B_s^q \frac{\partial}{\partial q} +
B_s^e \frac{\partial}{\partial e} +
C_s^{q_t} \frac{\partial}{\partial q_t} +
C_s^{q_s} \frac{\partial}{\partial q_s} +
C_s^{e_t} \frac{\partial}{\partial e_t} +
C_s^{e_s} \frac{\partial}{\partial e_s}
\,.
\end{gather*}
Now we apply the first equation
of the $k$-cosymplectic Euler--Lagrange equations,
$\displaystyle
i_{X_\alpha} \omega^\alpha_L =
\d E_L + \frac{\partial L}{\partial x^\alpha} \d x^\alpha
$.
Equating the coefficients of the differentials we obtain:
\begin{gather*}
B_t^e = \frac{e^2}{q_t}
\left( \frac 1e C_t^{q_t} - \tau C_s^{q_s} \right)
,\quad
B_t^q = q_t
\,,\quad
B_s^q = q_s
\,,
\\
\frac{q_t^2}{e^2} = \sigma^2
\,.
\end{gather*}
From these equations,
three coefficients of $\calX$
are determined from the variables and the other coefficients;
the last equation is a constraint,
$$
\zeta_1 = \frac12 \left( \frac{q_t^2}{e^2} - \sigma^2 \right)
\,,
$$
whose vanishing defines a submanifold
$\mathcal{S}_1$ of $\R^2 \times \oplus^2 T Q$.
At this stage, $\calX$ has nine undetermined coefficients.

Finally,
we analyze the tangency of $\calX$
(that is, of its components $X_t$, $X_s$)
to~$\mathcal{S}_1$.
Imposing
$X_t(\zeta_1) \vert_{\mathcal{S}_1}= 0$,
$X_s(\zeta_1) \vert_{\mathcal{S}_1}= 0$,
we obtain two additional relations between the undetermined coefficients (on $\mathcal{S}_1$),
and no more constraints.

\medskip

To complete our analysis
we will study
the $k$-cosymplectic Euler--Lagrange equations
with the second-order partial differential equation condition.
The generic expression of $\calX$ is given by
\begin{gather*}
X_t =
\frac{\partial}{\partial t} +
q_t \frac{\partial}{\partial q} +
e_t \frac{\partial}{\partial e} +
C_t^{q_t} \frac{\partial}{\partial q_t} +
C_t^{q_s} \frac{\partial}{\partial q_s} +
C_t^{e_t} \frac{\partial}{\partial e_t} +
C_t^{e_s} \frac{\partial}{\partial e_s}
\,,
\\
X_s =
\frac{\partial}{\partial s} +
q_s \frac{\partial}{\partial q} +
e_s \frac{\partial}{\partial e} +
C_s^{q_t} \frac{\partial}{\partial q_t} +
C_s^{q_s} \frac{\partial}{\partial q_s} +
C_s^{e_t} \frac{\partial}{\partial e_t} +
C_s^{e_s} \frac{\partial}{\partial e_s}
\,.
\end{gather*}
Now the first equation of
\eqref{eq-E-L-k-cosymplectic}
yields two identities,
a relation between the coefficients,
namely
$\displaystyle
e_t \frac{q_t}{e^2} - \frac 1e C_t^{q_t} + \tau C_s^{q_s} = 0
$,
and the same constraint $\zeta_1$ as above.
The tangency of $\calX$ to the submanifold~$\mathcal{S}_1$
determines the coefficients
$C_t^{q_t}$ and $C_s^{q_t}$ (on $\mathcal{S}_1$),
and no more constraints appear.
So $\calX$ is left with 5 undetermined coefficients.

\subsubsection*{Hamiltonian formalism}

The Hamiltonian formalism takes place in
$\R^2\times\oplus^2T^\ast Q$,
where we use natural coordinates
$(t,s;q,e;p^t,p^s,\pi^t,\pi^s)$.

The Legendre map
$\mathcal{F}L \colon \R^2 \times \oplus^2 TQ \to \R^2 \times \oplus^2 T^\ast Q$
is given by
\begin{equation*}
	\mathcal{F}L(t,s;q,e;q_t,q_s,e_t,e_s) =
    \left( t,s; q,e;\frac{1}{e}q_t,-\tau q_s,0,0 \right).
\end{equation*}
The primary Hamiltonian constraint submanifold
$\mathcal{P}$ of $\R^2\times\oplus^2T^\ast Q$
is defined by the constraints
\begin{equation*}
	\pi^t = 0,
    \quad
    \pi^s = 0.
\end{equation*}
Using $(t,s;q,e;p^t,p^s)$ as coordinates on the submanifold $\mathcal{P}$,
its $2$-precosymplectic structure is given by
\begin{equation*}
	\eta^t = \d t,\quad \eta^s = \d s,
    \quad
    \omega^t = \d q\wedge\d p^t,
    \quad
    \omega^s = \d q\wedge\d p^s.
\end{equation*}
Then,
\begin{equation*}
	\ker\eta^t \cap \ker\eta^s \cap \ker\omega^t \cap \ker\omega^s =
    \left\langle \frac{\partial}{\partial e} \right\rangle.
\end{equation*}
The Reeb vector fields are
\begin{equation*}
	\Reeb_t = \frac{\partial}{\partial t},
    \quad
    \Reeb_s = \frac{\partial}{\partial s}.
\end{equation*}
The Hamiltonian function on $\mathcal{P}$ is
\begin{equation*}
	h = \frac{1}{2} \,e \, (p^t)^2 - \frac{1}{2} \sigma^2 e - \frac{1}{2\tau} (p^s)^2.
\end{equation*}
Consider a $2$-vector field $\calX = (X_t,X_s)\in\X^2(\mathcal{P})$:
\begin{gather*}
	X_t =
    A^1_t\frac{\partial}{\partial t} +
    A^2_t\frac{\partial}{\partial s} +
    B^1_t\frac{\partial}{\partial q} +
    B^2_t\frac{\partial}{\partial e} +
    C^1_t\frac{\partial}{\partial p^t} +
    C^2_t\frac{\partial}{\partial p^s},
    \\
	X_s =
    A^1_s\frac{\partial}{\partial t} +
    A^2_s\frac{\partial}{\partial s} +
    B^1_s\frac{\partial}{\partial q} +
    B^2_s\frac{\partial}{\partial e} +
    C^1_s\frac{\partial}{\partial p^t} +
    C^2_s\frac{\partial}{\partial p^s}.
\end{gather*}
Now, Hamilton equations are written as
\begin{equation*}
\begin{cases}
	i(X_t)\,\omega^t + i(X_s)\,\omega^s =
    \d h - \d h(\Reeb_t)\eta^t - \d h(\Reeb_s)\eta^s,
    \\
	i(X_t)\,\eta^t = 1,
    \quad
    i(X_t)\,\eta^s = 0,
    \quad
    i(X_s)\,\eta^t = 0,
    \quad
    i(X_s)\,\eta^s = 1,
\end{cases}
\end{equation*}
which partly determine the coefficients of $\calX$:
\begin{equation*}
\begin{cases}
B^1_t = ep^t,\\
B^1_s = \frac{-1}{\tau}p^s,\\
C^1_t+C^2_s = 0,\\
A^1_t = 1, \quad A^2_t = 0,\\
A^1_s = 0, \quad A^2_s = 1,
\end{cases}
\end{equation*}
and imposes as a consistency condition the secondary Hamiltonian constraint
\begin{equation*}
\xi =
i\left(\frac{\partial}{\partial e}\right)\d h =
\frac{1}{2}(p^t)^2-\frac{1}{2}\sigma^2=
0 \quad
\mbox{\rm (on $\mathcal{P}$)} \:,
\end{equation*}
which defines the submanifold $\mathcal{P}_1\hookrightarrow\mathcal{P}$.
The tangency condition to this new submanifold,
$X_t(\xi)\vert_{\mathcal{P}_1}= 0$,
$X_s(\xi)\vert_{\mathcal{P}_1}= 0$,
determines the coefficients
$\displaystyle
C^1_t\Big\vert_{\mathcal{P}_1}=
\frac{1}{p^t} \,\sigma \frac{\partial \sigma}{\partial t}
$,
$\displaystyle
C^1_s\Big\vert_{\mathcal{P}_1}=
\frac{1}{p^t} \,\sigma \frac{\partial \sigma}{\partial s}
$,
and yields no more constraints.

Observe that $\mathcal{F}L^*(\xi_1)=\zeta_1$,
so $\mathcal{F}L(\mathcal{S}_1)=\mathcal{P}_1$
and, as the semi-holonomy condition does not originate constraints in the Lagrangian formalism,
there are no non-$\mathcal{F}L$-projectable Lagrangian constraints.

%%%%%%%%%%%%%%%%%%%%%%%%%%%%%%%%%%%%%%
\section{Conclusions and outlook}
%%%%%%%%%%%%%%%%%%%%%%%%%%%%%%%%%%%%%%

In this paper the concepts of $k$-precosymplectic manifold and of $k$-precosymplectic Hamiltonian system have been introduced, and we have proved the existence of global Reeb vector fields in these manifolds.
We have developed a constraint algorithm for $k$-precosymplectic (i.e. singular) field theories in order to find a submanifold of the phase bundle where there are solutions to the field equations.
This algorithm can be applied to the Hamiltonian and to the Lagrangian formalisms of these field theories.
In particular, the algorithm allows to find a submanifold where there are multivector fields which are solutions to the geometric field equations, and they are tangent to the submanifold.
These multivector fields are not necessarily integrable on this submanifold, but perhaps on a smaller submanifold of it.

In addition, in the case of the Lagrangian formalism,
the problem of finding {\sc sopde} multivector fields solution to the field equations has been briefly discussed,
but the problem of giving an intrinsic characterization of the ``{\sc sopde} constraints''
is a topic for further research.

Furthermore, an open problem is to find conditions to ensure the existence of some kind of Darboux coordinates in both $k$-presymplectic and $k$-precosymplectic manifolds.
Work on this subject is in progress.

Finally,
the constraint algorithm has been applied to classical field theories
described by affine Lagrangians,
analyzing the Hamiltonian and the Lagrangian formalisms and,
in this last case, the {\sc sopde} condition.
In a future work we would like to apply this analysis
to the study of some models in General Relativity
described by metric-affine Lagrangians,
such as the Einstein--Palatini model.

%%%%%%%%%%%%%%%%%%%%%%%%%%%%%%%%%%%%%%%%%%%%%%%%%%%%%%%%%%%%%%%%
\subsection*{Acknowledgments}
%%%%%%%%%%%%%%%%%%%%%%%%%%%%%%%%%%%%%%%%%%%%%%%%%%%%%%%%%%%%%%%%
We acknowledge the financial support from the
Spanish Ministerio de Econom\'{\i}a y Competitividad
project MTM2014--54855--P,
the Ministerio de Ciencia, Innovaci\'on y Universidades project
PGC2018-098265-B-C33,
and the Secretary of University and Research of the Ministry of Business and Knowledge of
the Catalan Government project
2017--SGR--932.
We are indebted to Prof.\ {\it Dieter Van den Bleeken} (Bo\u gazi\c ci University)
for having drawn our attention to one error in the examples of sections \ref{afflag} and \ref{2ndex} 
which has been corrected in the present version.

%%%%%%%%%%%%%%%%%%%%%%%%%%%%%%%%%%%%%%%%%%%%%%%%%%%%%%%%%%%%%%%%
%%%%%%%%%%%%%%%%%%%%%%%%%%%%%%%%%%%%%%%%%%%%%%%%%%%%%%%%%%%%%%%%

%%%%%%%%%%%%%%%%%%%%%%%%%%%%%%%%%%%%%%%%%%%%%%%%%%%%%%%%%%%%%%%%

\begin{thebibliography}{99}

\bibitem{abr78} 
\newblock R.~Abraham and J.~E. Marsden,
\newblock \emph{Foundations of Mechanics},
\newblock Addison-Wesley, California, 2nd edition, 1978.
[10.1090/chel/364].

\bibitem{AB-51}
\newblock J.~L. Anderson and P.~G. Bergmann,
\newblock{Constraints in covariant field theories,}
\newblock \emph{Phys. Rev.}, {\bf 83} (1951), 1018--1025.
[10.1103/PhysRev.83.1018].

\bibitem{awa92}
\newblock A.~Awane,
\newblock{$k$-symplectic structures,}
\newblock \emph{J. Math. Phys.}, {\bf 33} (1992), 4046--4052.
[10.1063/1.529855].

\bibitem{BGPR-86} 
\newblock C.~Batlle, J.~Gomis, J.~Pons and N.~Rom\'an-Roy,
\newblock{Equivalence between the {L}agrangian and {H}amiltonian formalism for constrained systems},
\newblock \emph{J. Math. Phys.}, {\bf 27} (1986), 2953--2962.
[10.1063/1.527274].

\bibitem{BBLSV-2015}
\newblock L.~B\'ua, I.~Bucataru, M.~de~Le\'on, M.~Salgado and S.~Vilari\~{n}o,
\newblock{Symmetries in {L}agrangian field theory,}
\newblock \emph{Rep. Math. Phys.}, {\bf 75} (2015), 333--357.
[10.1016/S0034-4877(15)30010-0].

\bibitem{CLM91}
\newblock D.~Chinea, M.~de~Le\'on and J.~C. Marrero,
\newblock Locally conformal cosymplectic manifolds and time-dependent Hamiltonian systems,
\newblock \emph{Comment. Math. Univ. Carolin.,} {\bf 32} (1991), 383--387.

\bibitem{chi94}
\newblock D.~Chinea, M.~de~Le\'on and J.~C. Marrero,
\newblock{The constraint algorithm for time-dependent Lagrangians,}
\newblock \emph{J. Math. Phys.}, {\bf 35} (1994), 3410--3447.
 [10.1063/1.530476].

\bibitem{Di-50}
\newblock P.~Dirac,
\newblock{Generalized {H}amiltonian dynamics,}
\newblock \emph{Can. J. Math.}, {\bf 2} (1950), 129--148.
 [10.4153/CJM-1950-012-1].

\bibitem{GMS-1997}
\newblock G.~Giachetta, L.~Mangiarotti and G.~Sardanashvily,
\newblock \emph{New {L}agrangian and {H}amiltonian Methods in Field Theory},
\newblock World Scientific, River Edge, 1997.
 [10.1142/2199].

\bibitem{got79}
\newblock M.~J. Gotay and J.~M. Nester,
\newblock Presymplectic {L}agrangian systems~{I}: The constraint algorithm and the equivalence theorem,
\newblock \emph{Ann. Inst. Henri Poincar\'e}, {\bf 30} (1979), 129--142.

\bibitem{got78}
\newblock M.~J. Gotay, J.~M. Nester and G.~Hinds,
\newblock{Presymplectic manifolds and the {D}irac-{B}ergmann theory of constraints,}
\newblock \emph{J. Math. Phys.}, {\bf 19} (1978), 2388--2399.
 [10.1063/1.523597].

\bibitem{GP-92}
\newblock X.~Gr\`acia and J.~M. Pons,
\newblock{A generalized geometric framework for constrained systems,}
\newblock \emph{Diff. Geom. Appl.}, {\bf 2} (1992), 223--247.
 [10.1016/0926-2245(92)90012-C].

\bibitem{GM2005}
\newblock X.~Gr\`acia and R.~Mart\'{\i}n,
\newblock{Geometric aspects of time-dependent singular differential equations},
\newblock \emph{Int. J. Geom. Methods Mod. Phys.}, {\bf 2} (2005), 597--618.
 [10.1142/S0219887805000697].

\bibitem{gra09}
\newblock X.~Gr\`acia, R.~Mart\'{\i}n and N.~Rom\'an-Roy,
\newblock{Constraint algorithm for $k$-presymplectic {H}amiltonian systems: Application to singular field theories,}
\newblock \emph{Int. J. Geom. Methods Mod. Phys.}, {\bf 6} (2009), 851--872.
[10.1142/S0219887809003795].

\bibitem{Gu-87}
\newblock C.~G\"unther,
\newblock{The polysymplectic {H}amiltonian formalism in field theory and calculus of variations I: The local case},
\newblock \emph{J. Diff. Geom.}, {\bf 25} (1987), 23--53.
 [10.4310/jdg/1214440723].

\bibitem{IM-92}
\newblock L.~A. Ibort and J.~Mar\'in-Solano,
\newblock{A geometric classification of Lagrangian functions and the reduction of evolution space,}
\newblock \emph{J. Phys. A: Math. Gen.}, {\bf 25} (1992), 3353--3367.
[10.1088/0305-4470/25/11/036].

\bibitem{dLe96B}
\newblock M.~de~Le\'on, J.~Mar\'{\i}n-Solano, and J.~C. Marrero,
\newblock{A geometrical approach to classical field theories: A constraint algorithm for singular theories,}
\newblock In \emph{New Developments in Differential Geometry}, Springer, Netherlands, \textbf{350} (1996), 291--312.
[10.1007/978-94-009-0149-0\underline{ }22].

\bibitem{dLe02}
\newblock M.~de~Le\'on, J.~Mar\'{\i}n-Solano, J.~C. Marrero, M.~C. Mu\~noz-Lecanda and N.~Rom\'an-Roy,
\newblock{Singular {L}agrangian systems on jet bundles},
\newblock \emph{Fortschr. Phys.}, {\bf 50} (2002), 105--169.
[10.1002/1521-3978(200203)50:2$<$105::AID-PROP105$>$3.0.CO;2-N].

\bibitem{dLe05}
\newblock M.~de~Le\'on, J.~Mar\'{\i}n-Solano, J.~C. Marrero, M.~C. Mu\~noz-Lecanda and N.~Rom\'an-Roy,
\newblock{Pre-multisymplectic constraint algorithm for field theories,}
\newblock \emph{Int. J. Geom. Meth. Mod. Phys.}, {\bf 2} (2005), 839--871.
[10.1142/S0219887805000880].

\bibitem{dLe98}
\newblock M.~de~Le\'on, E.~Merino, J.~A. Oubi\~na, P.~R. Rodrigues and M.~Salgado,
\newblock{Hamiltonian systems on $k$-cosymplectic manifolds,}
\newblock \emph{J. Math. Phys.}, {\bf 39} (1998), 876--893.
 [10.1063/1.532358].

\bibitem{dLe01}
\newblock M.~de~Le\'on, E.~Merino and M.~Salgado,
\newblock{$k$-cosymplectic manifolds and {L}agrangian field theories,}
\newblock \emph{J. Math. Phys.}, {\bf 42} (2001), 2092--2104.
[10.1063/1.1360997].

\bibitem{LeSaVi2016}
\newblock M.~de~Le\'on, M.~Salgado and S.~Vilari\~no,
\newblock \emph{Methods of Differential Geometry in Classical Field Theories: $k$-Symplectic and $k$-Cosymplectic Approaches},
\newblock World Scientific, Hackensack, 2016.
[10.1142/9693].

\bibitem{MMT-97}
\newblock G.~Marmo, G.~Mendella and W.~M. Tulczyjew,
\newblock{Constrained Hamiltonian systems as implicit differential equations,}
\newblock \emph{J. Phys. A}, {\bf 30} (1997), 277--293.
 [10.1088/0305-4470/30/1/020].

\bibitem{MR-92}
\newblock M.~C. Mu\~noz-Lecanda and N.~Rom\'an-Roy,
\newblock Lagrangian theory for presymplectic systems,
\newblock \emph{Ann. Inst. Henry Poincar\'e: Phys. Theor.}, {\bf 57} (1992), 27--45.

\bibitem{RRSV-2011}
\newblock A.~M. Rey, N.~Rom\'an-Roy, M.~Salgado and S.~Vilari\~no,
\newblock{$k$-cosymplectic classical field theories: Tulckzyjew and {S}kinner--{R}usk formulations},
\newblock \emph{Math. Phys. Anal. Geom.}, {\bf 15} (2012), 85--119.
 [10.1007/s11040-012-9104-z].

\bibitem{RRSV2011}
\newblock A.~M. Rey, N.~Rom\'an-Roy, M.~Salgado and S.~Vilari\~no,
\newblock{On the $k$-symplectic, $k$-cosymplectic and multisymplectic formalisms of classical field theories,}
\newblock \emph{J. Geom. Mechs.}, {\bf 3} (2011), 113--137.
[10.3934/jgm.2011.3.113].

\bibitem{nrr2009}
\newblock N.~Rom\'an-Roy,
\newblock{Multisymplectic {L}agrangian and {H}amiltonian formalisms of classical field theories,}
\newblock \emph{Symmetry Integrability Geom. Methods Appl (SIGMA)}, {\bf 5} (2009), Paper 100, 25 pp.
 [10.3842/SIGMA.2009.100].

\bibitem{SM-74}
\newblock E.~C.~G. Sudarshan and N.~Mukunda,
\newblock \emph{Classical Dynamics: A Modern Perspective},
\newblock Wiley, New York, 1974.
[10.1142/9751].

\bibitem{Su-82} 
\newblock K.~Sundermeyer,
\newblock \emph{Constrained Dynamics},
\newblock Lecture Notes in Physics {\bf 169}, Springer, Berlin, 1982.
[10.1007/BFb0036225].

\bibitem{Vi-2000}
\newblock S.~Vignolo,
\newblock{A new presymplectic framework for time-dependent {L}agrangian systems: the constraint algorithm and the second-order differential equation problem,}
\newblock \emph{J. Phys. A: Math. Gen.}, {\bf 33} (2000), 5117--5135.
 [10.1088/0305-4470/33/28/314].


\end{thebibliography}
\end{document}